\theoremstyle{definition}
\newtheorem{problem}{Problem}
\title{A Graph Width Perspective on Partially Ordered Hamiltonian Paths and Cycles II}
\titlerunning{A Graph Width Perspective on Partially Ordered Hamiltonian Paths and Cycles II}
\author{Jesse Beisegel}{Institute of Mathematics, Brandenburg University of Technology, Cottbus, Germany}{jesse.beisegel@b-tu.de}{https://orcid.org/0000-0002-8760-0169}{}
\author{Katharina Klost}{Institute of Computer Science, Freie Universität Berlin, Germany}{katharina.klost@fu-berlin.de}{https://orcid.org/0000-0002-9884-3297}{}
\author{Kristin Knorr}{Institute of Computer Science, Freie Universität Berlin, Germany}{kristin.knorr@fu-berlin.de}{https://orcid.org/0000-0003-4239-424X}{}
\author{Fabienne Ratajczak}{Institute of Mathematics, Brandenburg University of Technology, Cottbus, Germany}{fabienne.ratajczak@b-tu.de}{https://orcid.org/0000-0002-5823-1771}{}
\author{Robert Scheffler}{Institute of Mathematics, Brandenburg University of Technology, Cottbus, Germany}{robert.scheffler@b-tu.de}{https://orcid.org/0000-0001-6007-4202}{}
\authorrunning{J. Beisegel, K. Klost, K. Knorr, F. Ratajczak, and R. Scheffler}
\tikzstyle{vertex}=[draw, circle, fill=black, inner sep=1.5pt]
\renewcommand{\P}{\ensuremath{\mathsf{P}}}
\newcommand{\NP}{\ensuremath{\mathsf{NP}}}
\newcommand{\XP}{\ensuremath{\mathsf{XP}}{}}
\newcommand{\FPT}{\ensuremath{\mathsf{FPT}}}
\newcommand{\W}{\ensuremath{\mathsf{W[1]}}}
\newcommand{\MSO}{\ensuremath{\mathsf{MSO}}}
\newcommand{\cP}{\ensuremath{\mathcal{P}}}
\renewcommand{\O}{\ensuremath{\mathcal{O}}}
\newcommand{\G}{\ensuremath{\mathcal{G}}}
\newcommand{\T}{\ensuremath{\mathcal{T}}}
\newcommand{\N}{\ensuremath{\mathbb{N}}}
\newcommand{\R}{\ensuremath{\mathcal{R}}}
\renewcommand{\S}{\ensuremath{\mathcal{S}}}
\newcommand{\pip}{{\pi'}}
\renewcommand{\phi}{\varphi}
\newcommand{\paramfont}{\slshape}
\newcommand{\param}[1]{{\paramfont #1}}
\crefname{claim}{Claim}{Claims}
\Crefname{claim}{Claim}{Claims}
\crefname{observation}{Observation}{Observations}
\Crefname{observation}{Observation}{Observations}
\newcommand{\pef}[1]{(P\ref{#1})}
\keywords{Hamiltonian path, Hamiltonian cycle, partial order, graph width parameter, parameterized complexity}
\begin{document}

\maketitle

\begin{abstract}
We consider the problem of finding a Hamiltonian path or cycle with precedence constraints in the form of a partial order on the vertex set. We study the complexity for graph width parameters for which the ordinary problems \textsc{Hamiltonian Path} and \textsc{Hamiltonian Cycle} are in  \FPT. In particular, we focus on parameters that describe how many vertices and edges have to be deleted to become a member of a certain graph class. We show that the problems are \W-hard for such restricted cases as vertex distance to path and vertex distance to clique. We complement these results by showing that the problems can be solved in \XP{} time for vertex distance to outerplanar and vertex distance to block. Furthermore, we present some \FPT{} algorithms, e.g., for edge distance to block. Additionally, we prove para-\NP-hardness when considered with the edge clique cover number.
\end{abstract}

\section{Introduction}

\subparagraph*{Hamiltonian Paths and Cycles with Precedence} 

For some applications of the well-known \textsc{Traveling Salesman Problem} (TSP) it is necessary to add additional precedence constraints to the vertices which ensure that some vertices are visited before others in a tour. Examples are \emph{Pick-up and Delivery}  Problems~\cite{parragh2008survey,parragh2008survey2} or the \emph{Dial-a-Ride} problem~\cite{psaraftis1980dynamic}, where goods or people have to be picked up before they can be brought to their destination.

The generalization of TSP with this type of constraint is known as \textsc{Traveling Salesman Problem with Precedence Constraints} (TSP-PC) and has been studied, e.g., in \cite{ahmed2001travelling,bianco1994exact}. If we are looking for a path instead of a cycle, this is known as the \textsc{Sequential Ordering Problem} (SOP) or the \textsc{Minimum Setup Scheduling Problem} and has been studied, e.g., in~\cite{ascheuer1993cutting,colbourn1985minimizing,escudero1988inexact,escudero1988implementation}. 

As TSP is already \NP-complete without the precedence constraints, research on these problems has mainly focused on the practical applications and solution methods are restricted to heuristic algorithms and integer programming.
When moving to the structurally different Hamiltonian path (cycle) problem, only very restricted precedence constraints have been studied, e.g., where one or both endpoints of the Hamiltonian path are fixed. For these problems, polynomial-time algorithms have been presented for several graph classes including (proper) interval graphs~\cite{asdre2010fixed,asdre2010polynomial,li2017linear,mertzios2010optimal} and rectangular grid graphs~\cite{itai1982hamilton}.

In~\cite{beisegel2024computing}, Beisegel et al. introduce the \textsc{Partially Ordered Hamiltonian Path Problem (POHPP)}, where we are given a graph together with a partial order and search for a Hamiltonian path that is a linear extension of the partial order.
Additionally, they introduced the edge-weighted variant \textsc{Minimum Partially Ordered Hamiltonian Path Problem} (MinPOHPP). The authors show that POHPP is already \NP-hard for complete bipartite graphs and complete split graphs -- graph classes where \textsc{Hamiltonian Path} is trivial.
They also show that  POHPP is \W-hard when parameterized by the \param{width} of the partial order, i.e., the largest number of pairwise incomparable elements. Furthermore, they show that the \XP{} algorithm for that parameterization presented in the 1980s by Coulbourn and Pulleyblank~\cite{colbourn1985minimizing} is asymptotically optimal -- assuming the Exponential Time Hypothesis (ETH). They improve the algorithm to \FPT{} time if the problem is parameterized by the partial order's \param{distance to linear order}. Finally, the authors present a polynomial-time algorithm for MinPOHPP on outerplanar graphs.

\subparagraph*{Graph Width Parameters and Hamiltonicity} 

In~\cite{beisegel2024computing}, Beisegel et al. imply that their result for outerplanar graphs might be extended to series parallel graphs, otherwise known as the graphs of treewidth two. Furthermore, they suggest further research on the complexity of (Min)POHPP for graphs of bounded treewidth as well as other graph width parameters. Just as for specific graph classes, the prerequisite to solving (Min)POHPP for some graph parameter is the tractability of the base problems \textsc{Hamiltonian Path} and \textsc{Hamiltonian Cycle}. These belong to the most famous \NP-complete graph problems and, hence, have been studied for a wide range of graph classes and graph width parameters.\footnote{In general, \textsc{Hamiltonian Path} seems to lead a shadowy existence since many positive and negative algorithmic results are only given for its more popular sibling \textsc{Hamiltonian Cycle}.}

One of the first results for width parameters were polynomial-time algorithms for \textsc{Hamiltonian Cycle} and TSP on graphs of bounded \param{bandwidth}~\cite{lawler1985traveling,monien1980bounding}. Since both \textsc{Hamiltonian Cycle} and \textsc{Hamiltonian Path} are expressible in $\MSO_2$, Courcelle's theorem~\cite{courcelle1990monadic,courcelle1992monadic} implies \FPT{} algorithms for these problems when parameterized by \param{treewidth}. The running time bounds depending on the \param{treewidth}~$k$ given by Courcelle's theorem are quite bad. However, there are also \FPT{} algorithms with single exponential dependency on $k$~\cite{bodlaender2015deterministic,cygan2018fast,cygan2022solving}. Probably, these results cannot be transferred to \param{cliquewidth}: In contrast to $\MSO_2$, both \textsc{Hamiltonian Cycle} and \textsc{Hamiltonian Path} are not expressible in $\MSO_1$~\cite[Corollary~5.3.5]{ebbinghaus1995finite}. In fact, \textsc{Hamiltonian Cycle} is \W-hard when parameterized by \param{cliquewidth}~\cite{fomin2010intractability} and -- unless $\FPT = \W$ -- it is not solvable in \FPT{} time. Furthermore, it was shown that -- unless the ETH fails -- the problem cannot be solved in $f(k) n^{o(k)}$ time where $k$ is the \param{cliquewidth} and $n$ is the number of vertices~\cite{fomin2018clique-width}. This lower bound is matched by an $f(k) n^{\O(k)}$ algorithm given by Bergougnoux et al.~\cite{bergougnoux2020optimal}. \XP{} algorithms for \param{cliquewidth} with worse exponents have been presented earlier by Wanke~\cite{wanke1994k-nlc} (for the equivalent NLC-width) and by Espelage et al.~\cite{espelage2001solve}. 

As \FPT{} algorithms for \textsc{Hamiltonian Path} and \textsc{Hamiltonian Cycle} parameterized by \param{cliquewidth} are highly unlikely, the research has focused on upper bounds of \param{cliquewidth}. \FPT{} algorithms were presented for \param{neighborhood diversity}~\cite{lampis2012algorithmic}, \param{distance to cluster}~\cite{doucha2012cluster,jansen2013power}, \param{modular width}~\cite{gajarsky2013parameterized} and \param{split matching width}~\cite{saether2016between,saether2017solving}. Besides this, parameterized algorithms have also been given for graph width parameters incomparable to \param{cliquewidth} and \param{treewidth}. Examples are \FPT{} algorithms for \param{distance to proper interval graphs}~\cite{golovach2020graph} and for two parameters that describe the distance to Dirac's condition on the existence of Hamiltonian paths~\cite{jansen2019hamiltonicity}. Most recently, \FPT{} algorithms for the \param{independence number} have been presented~\cite{fomin2025path}. 

The parameters \param{treewidth}, \param{pathwidth}, and \param{bandwidth} in the context of (Min)POHPP have been studied in the companion paper~\cite{beisegel2025graphi}, where we showed that POHPP is \NP-hard for small constants. Furthermore, it follows directly from the NP-completeness of POHPP on complete bipartite graphs that it is also \NP-complete on graphs of \param{cliquewidth}~$2$, \param{neighborhood diversity}~$2$, \param{modular width}~$2$ and \param{split matching width}~$1$. As implied before, the next likely candidates to yield \FPT{} or \XP{} algorithms are the parameters that consider the smallest number of vertices or edges that have to be removed such that the resulting graph is in some given graph class. We refer to these parameters as \param{(vertex) distance to $\G$} and \param{edge distance to $\G$} where $\G$ is the respective graph class.\footnote{Alternative terms for these parameters are \param{$\G$ vertex deletion number} and \param{$\G$ edge deletion number}.} 
Here, we will also consider a special variant of these distance parameters that is motivated by the parameter \param{twin-cover number} introduced by Ganian~\cite{ganian2011twin-cover}. We say a graph has \param{distance to $\G$ module(s)}\footnote{If the class $\G$ contains only connected graphs, then we use \param{distance to $\G$ module} otherwise we use \param{distance to $\G$ modules}.}~$k$ if there is a set $W \subseteq V(G)$ of $k$ vertices such that $G - W$ is in $\G$ and every component of $G - W$ forms a module in $G$. Using this terminology, the \param {twin-cover number} is equal to the \param{distance to cluster modules}. 

\subparagraph*{Our Contribution} 

\begin{figure}
\centering
\resizebox{\textwidth}{!}{
\begin{tikzpicture}[xscale=2.5,yscale=1]
  \tikzset{np/.style={draw, trapezium, trapezium angle=77.5, fill=red!50!white}}
  \tikzset{fpt/.style={draw,rectangle, fill=green!30!lightgray,rounded corners,
  }}
  \tikzset{xpw/.style={draw,rectangle,fill=orange!50!white}}
  \tikzset{xp/.style={draw,rectangle,fill=yellow!20!white}}
  \tikzset{w/.style={draw,rectangle,fill=yellow!70!white}}
  \tikzset{fptnp/.style={draw,rectangle, fill=green!30!lightgray,  draw=red!50!white, line width=2pt,
  rounded corners
  }}
  \tikzset{minnp/.style={draw=red!50!white, line width=2pt}}
  \tikzset{xpnp/.style={draw,rectangle,fill=yellow!20!white, draw=red!50!white, line width=2pt, 
  }}
  \tikzset{wnp/.style={draw,rectangle,fill=yellow!70!white,  draw=red!50!white, line width=2pt, 
  }} 
  \tikzset{xpwnp/.style={draw,rectangle,fill=orange!50!white,  draw=red!50!white, line width=2pt,
  }}
  \tikzset{open/.style={draw,rectangle}}
  \scriptsize
  \paramfont
  \node[fpt] (vc) at (-0.95,0) {\mathstrut Vertex Cover};
  \node[xpw, align=center] (d2path) at (-0.36,1) {\mathstrut Distance to \\ Path};
  \node[xpwnp, align=center] (d2clique) at (2.5,2) {\mathstrut Distance to \\ Clique};
  \node[xpwnp, align=center] (d2block) at (0.3,3) {\mathstrut Distance to \\ Block};
  \node[fptnp, align=center] (ed2block) at (1.1,1.3) {\mathstrut Edge Distance \\ to Block};
  \node[xpwnp, align=center] (tc) at (2,1.1) {\mathstrut Twin-Cover};
  \node[np, align=center] (ecc) at (3.5,2) {\mathstrut Edge \\ Clique Cover};
  \node[np, align=center] (nd) at (2.5,3) {\mathstrut Neighborhood \\ Diversity~{\normalfont \cite{beisegel2024computing}}};
  \node[np, align=center] (cc) at (3.5,3.1) {\mathstrut Clique Cover};
  \node[np, align=center] (is) at (3.5,4) {\mathstrut Independent Set};
  \node[fpt] (td) at (-2.1,0.5) {\mathstrut Treedepth};
  \node[fpt, align=center] (fes) at (0.275,1) {\mathstrut Feedback \\ Edge Set};
  \node[np] (bw) at (-1.9,1.65) {\mathstrut Bandwidth~{\normalfont \cite{beisegel2025graphi}}};
  \node[np] (pw) at (-2,3) {\mathstrut Pathwidth~{\normalfont \cite{beisegel2025graphi}}};
  \node[np] (tw) at (-1.25,3.75) {\mathstrut Treewidth~{\normalfont \cite{beisegel2025graphi}}};
  \node[xpw, align=center] (fvs) at (-0.4,2.3) {\mathstrut Feedback \\ Vertex Set};
  \node[xpw, align=center] (d2op) at (-1.25,2.95) {\mathstrut Distance to \\ Outerplanar};
  \node[np, align=center] (d2pi) at (1.5,3) {\mathstrut Distance to \\ Proper Interval~{\normalfont \cite{beisegel2025graphi}}};
  \node[np, align=center] (cw) at (1.25,4.25) {\mathstrut Cliquewidth~{\normalfont \cite{beisegel2024computing}}};
  \node[fptnp, align=center] (d2cm) at (2.5,0) {\mathstrut Distance to \\ Clique Module};
  \node[xpw, align=center] (d2lfm) at (-1.1,1.1) {\mathstrut Distance to \\ Linear Forest \\ Modules};
   \draw[thick,-stealth'] (bw) -- (pw);
   \draw[thick,-stealth'] (pw) -- (tw);
   \draw[thick,-stealth'] (vc.0) -- (tc);
   \draw[thick,-stealth'] (vc.180) -- (td.345);
   \draw[thick,-stealth'] (td.160) -- (pw.-166);
   \draw[thick,-stealth'] (d2clique) -- (cc);
   \draw[thick,-stealth'] (d2clique) -- (ecc);
   \draw[thick,-stealth'] (d2op) -- (tw);
   \draw[thick,-stealth'] (ecc) -- (nd);
   \draw[thick,-stealth'] (cc) -- (is);
   \draw[thick,-stealth'] (d2block) -- (cw);
   \draw[thick,-stealth'] (tw) -- (cw);
   \draw[thick,-stealth'] (nd) -- (cw);
   \draw[thick,-stealth'] (fes) -- (ed2block);
   \draw[thick,-stealth'] (ed2block.166) -- (d2block);
   \draw[thick,-stealth'] (d2cm) -- (d2clique);
   \draw[thick,-stealth'] (d2cm) -- (tc);
   \draw[thick,-stealth'] (d2path) -- (d2pi);
   \draw[thick,dashed,-stealth'] (td.355) -- (vc.173);
   \draw[thick,dashed,-stealth'] (vc.0) -- (d2cm.170);
   \draw[thick,dashed,-stealth'] (vc.170) -- (bw.200);
   \draw[thick,dashed,-stealth'] (vc.0) -- (fes.215);
   \draw[thick,dashed,-stealth'] (ecc) -- (cc);

   \draw[thick, -stealth'] (d2path.105) -- (fvs);
   \draw[thick, -stealth'] (fvs) -- (d2block);
   \draw[thick, -stealth'] (d2lfm) -- (fvs);
   \draw[thick, -stealth'] (d2lfm) -- (d2pi);
   \draw[thick, -stealth'] (fvs) -- (d2block);
   \draw[thick, -stealth'] (tc) -- (d2block.-30);
   \draw[thick, dashed, -stealth'] (tc.0) -- (ecc);
   \draw[thick, -stealth'] (d2clique.200) -- (d2block.-20);
   \draw[thick, -stealth'] (d2lfm) -- (pw);
   \draw[thick, -stealth'] (tc) -- (d2pi);
   \draw[thick, -stealth'] (d2clique) -- (d2pi);
   \draw[thick, -stealth'] (fes) -- (fvs);
   \draw[thick, -stealth'] (d2path.105) -- (pw);
   \draw[thick, -stealth'] (fvs) -- (d2op);
   \draw[thick, dashed, -stealth'] (vc) -- (d2path);
   \draw[thick, -stealth'] (vc) -- (d2lfm);
   \draw[thick, -stealth'] (1.5,0.89) -- (nd.200);

   \node[draw=black] at (0.75,-1) {\begin{tikzpicture}
         \normalfont
       \node (Legend) at (-2,-1) {Legend:};
       \node[fpt] (FPT) [right=0.5cm of Legend] {\FPT};
       \node[np] (NP) [right=0.5cm of FPT] {para-\NP-hard};
       \node[xpw] (XPW) [right=0.5cm of NP] {\XP{} and \W-hard};
       \node[minnp] (MINNP) [right=0.5cm of XPW] {MinPOHPP para-\NP-hard};
   \end{tikzpicture}};
   
 \end{tikzpicture}
}
\caption{Diagram illustrating the complexity results for (Min)POHPP for different graph width parameters. A directed solid edge from parameter $P$ to parameter $Q$ means that a bounded value of $P$ implies a bounded value for $Q$. A directed dashed edge implies that this relation does not hold in general but for traceable graphs, i.e., graphs having a Hamiltonian path. If a directed solid path from $P$ to $Q$ is missing, then parameter $Q$ is unbounded for the graphs of bounded $P$. The same holds for the traceable graphs if there is also no path using dashed edges.}\label{fig:parameters}
\end{figure}
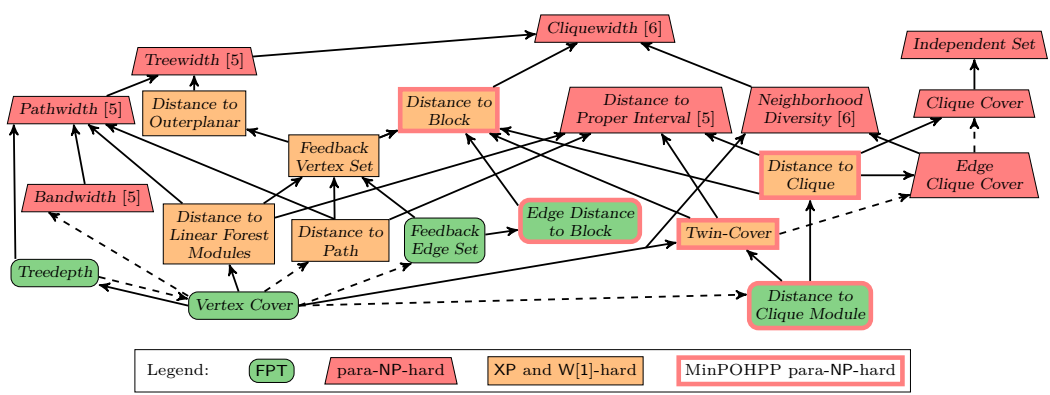

We study the complexity of (Min)POHPP and its cycle variant for graph width parameters where \textsc{Hamiltonian Path} and \textsc{Hamiltonian Cycle} can be solved in \FPT{} time (see \cref{fig:parameters} for an overview). We will show that the computational complexity of the path and cycle problems are essentially equivalent (see \cref{thm:path2cycle,thm:cycle2path}) and therefore all proofs will focus on the path version.

\Cref{sec:d2p} is dedicated to sparse parameters, i.e., graph parameters that are unbounded for complete graphs. We show that POHPP is \W-hard for \param{distance to path} and \param{distance to linear forest modules}. We present simple arguments why MinPOHPP can be solved in \FPT{} time for \param{feedback edge set number} and \param{treedepth}. Furthermore, we present an \XP{} algorithm for \param{distance to outerplanar}. We also give an \FPT{} algorithm for plane graphs parameterized by the number of vertices that lie in the interior of the outer face.

In \cref{sec:d2c}, we consider graph width parameters that are bounded for cliques. We show that POHPP is \NP-complete for graphs of \param{edge clique cover number}~3 and \param{clique cover number}~2. Furthermore, we prove \W-hardness for \param{distance to clique} and \param{distance to cluster modules} aka \param{twin cover number}. On the positive side, we present an \XP{} algorithm for POHPP when parameterized by \param{distance to block} and \FPT{} algorithms when parameterized by \param{edge distance to block} or \param{distance to clique module}.

\section{Preliminaries}

\subparagraph{General Notation and Partial Orders} A \emph{partial order} $ \pi $ on a set $X$ is a reflexive, antisymmetric and transitive relation on $X$. The tuple $(X, \pi)$ is then called a \emph{partially ordered set}. We also denote $(x,y) \in \pi$ by $x \prec_\pi y$ if $x \neq y$. If it is clear which partial order is meant, then we sometimes omit the index. A \emph{minimal element} of a partial order $\pi$ on $X$ is an element $x \in X$ for which there is no element $y \in X$ with $y \prec_\pi x$. The \emph{reflexive and transitive closure} of a relation $\R$ is the smallest relation $\R'$ such that $\R \subseteq \R'$ and $\R'$ is reflexive and transitive.

A \emph{linear ordering} of a finite set $X$ is a bijection $\sigma: X \rightarrow \{1,2,\dots,|X|\}$. We will often refer to linear orderings simply as orderings. Furthermore, we will denote an ordering by a tuple $(x_1, \ldots, x_n)$ which means that $\sigma(x_i) = i$. Given two elements $x$ and $y$ in $X$, we say that $x$ is \emph{to the left} (resp. \emph{to the right}) of $y$ if $\sigma(x)<\sigma(y)$ (resp. $\sigma(x)>\sigma(y)$) and we denote this by $x \prec_{\sigma}y$ (resp.  $x \succ_{\sigma}y$). A \emph{linear extension} of a partial order $\pi$ is a linear ordering $\sigma$ of $X$ that fulfills all conditions of $\pi$, i.e., if $x \prec_\pi y$, then $x \prec_\sigma y$.

For $n \in \N$, the notation $[n]$ refers to the set $\{i \in \N \mid 1 \leq i \leq n\}\).

\subparagraph{Graphs and Graph Classes}
All the graphs considered here are finite. For the standard notation of graphs we refer to the book of Diestel~\cite{diestel}.

For a subgraph $H$ of $G$, we say that two vertices $v,w \in V(H)$ are \emph{$H$-neighbors} if $vw \in E(H)$. A vertex $v$ of a connected graph $G$ is a \emph{cut vertex} if $G - v$ is not connected. If $G$ does not contain a cut vertex, then $G$ is \emph{2-connected}. A \emph{block} of a graph is an inclusion-maximal 2-connected induced subgraph. The \emph{block-cut tree} $\mathcal{T}$ of $G$ is the bipartite graph that contains a vertex for every cut vertex of $G$ and a vertex for every block of $G$ and the vertex of block $B$ is adjacent to the vertex of a cut vertex $v$ in $\mathcal{T}$ if $B$ contains $v$.

A graph is a \emph{linear forest} if all its connected components are paths. A graph is a \emph{cluster graph} if all its components are cliques. A graph is a \emph{block graph} if its blocks are cliques. 

A graph is \emph{planar} if it has a crossing-free embedding in the plane, and together with this embedding it is called a \emph{plane graph}. For a plane graph $G$ we call the regions of $\mathbb{R}^2 \setminus G$ the \emph{faces} of $G$. Every plane graph has exactly one unbounded face which is called the \emph{outer face}. Note that we will use the face and the closed walk of $G$ that forms the border of that face interchangeably. A graph is called \emph{outerplanar} if it has a crossing-free embedding such that all of the vertices belong to the outer face and such an embedding is also called \emph{outerplanar}.

\subparagraph{Graph Width Parameters}
A \emph{treedepth decomposition} of a graph consists of a rooted forest $F$ where every vertex of $G$ is mapped to a vertex in $F$ such that two vertices that are adjacent in $G$ have to form ancestor and descendant in $F$. The \param{treedepth} of a graph $G$ is the minimal height of a treedepth decomposition.

An \emph{(edge) clique cover} of a graph $G$ is a partition of the vertex set of $G$ (the edge set of $G$) into cliques. The \param{(edge) clique cover number} of $G$ is the minimal size of an (edge) clique cover of $G$.

Let $\G$ be a graph class. The \param{distance to $\G$} of a graph $G$ is the minimal number of vertices that need to be removed from $G$ so that the resulting graph belongs to $\G$. Similarly, the \param{edge distance to $\G$} of a graph $G$ is the minimal number of edges that need to be removed from $G$ so that the resulting graph belongs to $\G$. A non-trivial \emph{module} of a graph is a strict subset $M \subsetneq V(G)$ such that for all vertices $u, v \in M$ it holds that every neighbor of $u$ outside of $M$ is also a neighbor of $v$. The \param{distance to $\G$ module(s)} is the minimal number of vertices that need to be removed from $G$ so that the resulting graph is in $\G$ and every component of the resulting graph is a non-trivial module in $G$. Several of the distance parameters have got their own name. So \param{vertex cover number} is equivalent to \param{distance to edgeless}, \param{feedback vertex (edge) set number} is equivalent to \param{(edge) distance to tree} and \param{twin-cover number}~\cite{ganian2011twin-cover} is equivalent to \param{distance to cluster modules}.

\subparagraph{Hamiltonian Paths and Cycles}

A \emph{Hamiltonian path (cycle)} of a graph $G$ is a path (cycle) that contains all the vertices of $G$. A graph is \emph{traceable} if it has a Hamiltonian path and \emph{Hamiltonian} if it has a Hamiltonian cycle. Here, we only consider \emph{ordered} Hamiltonian paths, i.e., one of the two possible orderings of the path is fixed. Given a partial order $\pi$ on a graph's vertex set, an ordered Hamiltonian path is a \emph{$\pi$-extending Hamiltonian path} if its order is a linear extension of $\pi$. Using this definition we can generalize the classical \textsc{Hamiltonian Path} using precedence constraints.

\begin{problem}{\textsc{Partially Ordered Hamiltonian Path Problem} (POHPP)}
\begin{description}
\item[\textbf{Instance:}] A graph $G$, a partial order $\pi$ on the vertex set of $G$.
\item[\textbf{Question:}]
Is there an ordered Hamiltonian path $(v_1, \dots, v_n)$ in $G$ such that for all $i, j \in \{1,\dots,n\}$ it holds that if $(v_i,v_j) \in \pi$, then $i \leq j$?
 \end{description}
\end{problem}

A similar definition can be derived for \textsc{Hamiltonian Cycle}.

\begin{problem}{\textsc{Partially Ordered Hamiltonian Cycle Problem} (POHCP)}
\begin{description}
\item[\textbf{Instance:}] A graph $G$, a partial order $\pi$ on the vertex set of $G$.
\item[\textbf{Question:}]
Is there an ordered Hamiltonian path $(v_1, \dots, v_n)$ in $G$ such that $v_1$ and $v_n$ are adjacent and for all $i, j \in \{1,\dots,n\}$ it holds that if $(v_i,v_j) \in \pi$, then $i \leq j$?
 \end{description}
\end{problem}

The problems \textsc{Hamiltonian Path} and \textsc{Hamiltonian Cycle} without precedence constraints are independent from each other, i.e., there exist graph classes where one of the problems is trivial while the other is \NP-hard. It is easy to show that \textsc{Hamiltonian Path} is \NP-complete on the graphs that are not Hamiltonian.\footnote{Take a graph $G$ and add a universal vertex $u$ and a leaf that is only adjacent to $u$. The resulting graph is not Hamiltonian and is traceable if and only if $G$ is traceable.} In contrast, \textsc{Hamiltonian Cycle} is \NP-complete on traceable graphs~\cite[Lemma~21.18]{korte2018combinatorial}. Using the same arguments as in~\cite{beisegel2025graphi} for graph classes, we can show that for POHPP and POHCP the status of being in \FPT{} or \XP{} is equivalent for many graph width parameters.

To this end, let $\mathbb{G}$ be the family of all graphs. Let $\xi : \mathbb{G} \to \N$ be a graph width parameter. A \emph{graph operation} $\Omega$ is a function $\mathbb{G} \to \mathbb{G}$. We say that $\xi$ is \emph{closed under graph operation $\Omega$} if there is a computable function $h : \N \to \N$ such that for all graphs $G$, it holds that $\xi(\Omega(G)) \leq h(\xi(G))$.

\begin{observation}[see Observation~2.4 in~\cite{beisegel2025graphi}]\label{thm:path2cycle}
    Let $\xi$ be a graph width parameter that is closed under the addition of a universal vertex. Then there is a polynomial-time \FPT-reduction from POHPP parameterized by $\xi$ to POHCP parameterized by~$\xi$.
\end{observation}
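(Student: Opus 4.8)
The plan is to use the standard trick of turning a Hamiltonian path into a Hamiltonian cycle by adding a universal vertex, combined with a precedence constraint that pins this new vertex to the front of the order. Given an instance $(G,\pi)$ of POHPP, I would construct the instance $(G',\pi')$ of POHCP where $G'$ is obtained from $G$ by adding one new vertex $u$ adjacent to every vertex of $G$, and $\pi'$ is the reflexive-transitive closure of $\pi \cup \{(u,x) \mid x \in V(G)\}$; equivalently, $\pi'$ agrees with $\pi$ on $V(G)$ and has $u$ as its unique minimal element. This construction is computable in polynomial time, and since $\xi$ is closed under the addition of a universal vertex there is a computable $h$ with $\xi(G') \le h(\xi(G))$, so it is an \FPT-reduction once correctness is established.

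For the ``only if'' direction I would take a $\pi$-extending Hamiltonian path $(v_1,\dots,v_n)$ of $G$ and check that $(u,v_1,\dots,v_n)$ is a valid solution of POHCP on $(G',\pi')$: the edge $uv_1$ exists in $G'$ and the edges $v_iv_{i+1}$ lie in $G \subseteq G'$, so it is an ordered Hamiltonian path of $G'$; its endpoints $u$ and $v_n$ are adjacent since $u$ is universal; and the order extends $\pi'$ because $u$ occupies the first position (so every pair $(u,x)$ is respected) while the remaining order is exactly the $\pi$-extending order of the original path.

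For the ``if'' direction I would start from an ordered Hamiltonian path $(w_1,\dots,w_{n+1})$ of $G'$ with $w_1w_{n+1} \in E(G')$ whose order extends $\pi'$. Since $(u,w_1) \in \pi'$ forces the position of $u$ to be at most that of $w_1$, which is $1$, we get $w_1 = u$. Then $(w_2,\dots,w_{n+1})$ covers exactly $V(G)$, and each of its edges $w_iw_{i+1}$ with $i \ge 2$ is an edge of $G'$ not incident to $u$, hence an edge of $G$; so it is a Hamiltonian path of $G$, and it is $\pi$-extending because its order extends the restriction of $\pi'$ to $V(G)$, which is $\pi$.

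The argument thus amounts to a direct correspondence between the two solution sets, and there is no serious obstacle; the only points needing a little care are that an edge of $G'$ joining two vertices of $G$ must in fact be an edge of $G$ (true because every new edge is incident to $u$) and that the added constraints genuinely force $u$ into the first position of any POHCP solution.
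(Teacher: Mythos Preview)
Your proof is correct and follows the standard reduction one would expect here; the paper itself does not supply a proof but merely cites Observation~2.4 of~\cite{beisegel2025graphi}, and your construction (add a universal vertex and force it to be the unique minimal element of the partial order) is precisely the natural argument underlying that citation. Nothing is missing.
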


For the converse direction, we are not able to give a (many-one) \FPT-reduction. Nevertheless, we can give an \FPT-Turing reduction.

\begin{observation}[see Observation~2.3 in~\cite{beisegel2025graphi}]\label{thm:cycle2path}
    Let $\xi$ be a graph width parameter. If we can solve (Min)POHPP in time $f(\xi(G)) \cdot n^{g(\xi(G))}$ on a graph $G$, then we can solve (Min)POHCP in time $f(\xi(G)) \cdot n^{g(\xi(G)) + 2}$ on $G$.
\end{observation}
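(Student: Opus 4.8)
The plan is to turn a (Min)POHPP-algorithm into a (Min)POHCP-algorithm by brute-forcing over the edge of the sought Hamiltonian cycle at which the desired linear extension begins. If $(v_1,\dots,v_n)$ is an ordered Hamiltonian path witnessing that $(G,\pi)$ is a yes-instance of POHCP, then $v_1$ is a minimal element of $\pi$, $v_n$ is a maximal element, $v_1v_n\in E(G)$, and $(v_1,\dots,v_n)$ is a $\pi$-extending Hamiltonian path from $v_1$ to $v_n$; conversely, any $\pi$-extending Hamiltonian path whose two endpoints are adjacent in $G$ closes up to such a cycle. Hence it suffices to decide, for each of the $\O(n^2)$ ordered pairs $(s,t)$ with $st\in E(G)$, whether $G$ admits a $\pi$-extending Hamiltonian path that starts in $s$ and ends in $t$.

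For a fixed pair $(s,t)$ I encode the two endpoint constraints into the partial order: let $\pi'$ be the reflexive transitive closure of $\pi \cup \{(s,x) : x\in V(G)\} \cup \{(x,t): x\in V(G)\}$. If this relation is not antisymmetric --- equivalently, if $s$ is not a minimal element of $\pi$ or $t$ is not a maximal element of $\pi$, which is checkable in polynomial time --- then no $\pi$-extending Hamiltonian path can have $s$ and $t$ as its endpoints, so the pair is discarded. Otherwise $\pi'$ is a partial order, and since $(s,x)\in\pi'$ for every $x$ (resp. $(x,t)\in\pi'$ for every $x$), every $\pi'$-extending ordered Hamiltonian path must start in $s$ and end in $t$. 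As $\pi\subseteq\pi'$, a $\pi'$-extending Hamiltonian path is in particular $\pi$-extending; conversely, a $\pi$-extending Hamiltonian path from $s$ to $t$ is a linear extension of $\pi'$, because linear extensions are preserved under reflexive transitive closure. Thus $(G,\pi')$ is a yes-instance of POHPP if and only if $G$ has a $\pi$-extending Hamiltonian path from $s$ to $t$, and running the assumed POHPP-algorithm on $(G,\pi')$ for all candidate pairs, answering ``yes'' iff some call does, decides POHCP. For the optimization variant we use the same pairs, but add to the value returned by the MinPOHPP-algorithm on $(G,\pi')$ the weight of the closing edge $st$, and return the minimum over all pairs.

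Since only the partial order is modified and the graph $G$ is untouched, $\xi(G)$ is unchanged in every call, so each call costs $f(\xi(G))\cdot n^{g(\xi(G))}$ time; computing the closures and the weight bookkeeping contribute only polynomial overhead, which is dominated. Over the $\O(n^2)$ pairs this yields the claimed bound $f(\xi(G))\cdot n^{g(\xi(G))+2}$. I expect the only delicate point to be the correctness of the endpoint gadget from both sides --- that forcing $s$ to the front and $t$ to the back via the transitive closure keeps a genuine cycle-witness feasible (it does, since that path already extends $\pi'$) and never creates spurious solutions (it cannot, as $\pi\subseteq\pi'$) --- together with the observation that this argument only yields an \FPT-Turing (in fact \XP-Turing) reduction, since the correct cut edge is not known in advance, which is exactly why no many-one reduction is claimed.
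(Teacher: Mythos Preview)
Your proof is correct and follows precisely the approach indicated by the paper (which defers to~\cite{beisegel2025graphi}): iterate over all $\O(n^2)$ ordered adjacent pairs $(s,t)$, force $s$ to be first and $t$ to be last by augmenting $\pi$, and call the (Min)POHPP oracle on the unchanged graph, yielding an \FPT-Turing reduction exactly as the paper remarks just before the observation. Your discussion of why this is only a Turing reduction, and your handling of the weighted case by adding back the cost of the closing edge $st$, are both on point.
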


Note that \cref{thm:path2cycle} does also hold for the classical problems \textsc{Hamiltonian Path} and \textsc{Hamiltonian Cycle}. In contrast, \cref{thm:cycle2path} does not hold for the classical problems since \textsc{Hamiltonian Cycle} is \NP-complete on traceable graphs~\cite[Lemma~21.18]{korte2018combinatorial}.

These two observations allow us to restrict our proofs to the path case. Whenever we prove \W-hardness for POHPP, the respective width parameter is closed under the addition of a universal vertex. So \cref{thm:path2cycle} implies \W-hardness also for the cycle case. If we present some \FPT{} or \XP{} algorithm for (Min)POHPP, then \cref{thm:cycle2path} also implies an \FPT{} or \XP{} algorithm for (Min)POHCP.

\subparagraph{Complexity} We will use the following three problems in reductions.
\begin{problem}{\textsc{Multicolored Clique Problem} (MCP)}
\begin{description}
\item[\textbf{Instance:}] A graph $G$ with a proper coloring by $k$ colors.
\item[\textbf{Question:}]
Is there a clique $C$ in $G$ such that $C$ contains exactly one vertex of each color?
 \end{description}
\end{problem}

The MCP was shown to be \W-hard by Pietrzak~\cite{pietrzak2003parameterized} and independently by Fellows et al.~\cite{fellows2009parameterized}. Note that this also holds if all color classes have the same size.

For non-empty disjoint sets $A$ and $B$ with $|A| = |B|$ and a partial order $\pi$ on $A \cup B$, we say that a linear extension $(x_1, \dots, x_n)$ of $\pi$ is an \emph{alternating} linear
extension if $x_i \in A$ if and only if $i$ is odd. 

\begin{problem}{\textsc{Alternating Linear Extension Problem}}
\begin{description}
\item[\textbf{Instance:}] Two non-empty disjoint sets $A$ and $B$ with $|A| = |B|$, partial order $\pi$ on $A \cup B$.
\item[\textbf{Question:}] Is there an alternating linear extension of $\pi$?
\end{description}
\end{problem}

Beisegel et al.~\cite{beisegel2024computing} showed that the \textsc{Alternating Linear Extension Problem} is \NP-complete even if the partial order is \emph{oriented from $A$ to $B$}, i.e., for all constraints $a \prec_\pi b$ it holds that $a \in A$ and $b \in B$.

\section{Sparse Width Parameters}\label{sec:d2p}
We start with parameters that are unbounded for cliques. As is shown in~\cite{beisegel2025graphi}, both POHPP and POHCP are para-\NP-hard when parameterized by \param{treewidth}. Therefore, we focus on parameters that form upper bounds of \param{treewidth}.

\subsection{Hardness}

We start by showing that even for very restricted \param{distance to \G} parameters, there is no hope for \FPT{} algorithms.

\begin{theorem}\label{thm:w1-d2p}
     POHPP and POHCP are \W-hard when parameterized by either \param{distance to path} or \param{distance to linear forest modules}.
\end{theorem}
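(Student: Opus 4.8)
The plan is to reduce from the \textsc{Multicolored Clique Problem}, which is \W-hard and remains so when all $k$ color classes have the same size $m$. Given such an instance $G$ with color classes $V_1, \dots, V_k$, I would build a graph $H$ together with a partial order $\pi$ so that $H$ has a small \param{distance to path} (and \param{distance to linear forest modules}) depending only on $k$, and $H$ admits a $\pi$-extending Hamiltonian path if and only if $G$ has a multicolored clique. The deletion set $W$ will consist of $\O(k^2)$ ``selector'' vertices — roughly one per color (to pick the clique vertex of that color) and one per color pair (to verify the chosen pair is an edge) — while $H - W$ will be a disjoint union of paths, one long path per color class whose internal vertices encode the $m$ candidate vertices of $V_i$, plus edge-gadget paths. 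Since every component of $H - W$ is a path, and one can arrange each to be a module with respect to $W$ (all path-internal encoding vertices of a given gadget see exactly the same vertices of $W$), the same construction simultaneously bounds \param{distance to linear forest modules}.

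The key steps, in order: (1) For each color $i$, create a path gadget $P_i$ whose traversal forces the Hamiltonian path to ``commit'' to exactly one vertex $v \in V_i$; this is done by making $P_i$ a path of blocks, one block of $m$ vertices, with precedence constraints (and a private selector vertex $s_i \in W$ adjacent appropriately) so that entering and leaving $P_i$ through $s_i$ picks out a single index. (2) For each pair $\{i,j\}$ of colors, create an edge-verification gadget, again a path, with a selector $s_{ij} \in W$, designed so that the path can traverse it consistently only if the vertices chosen in $P_i$ and $P_j$ form an edge of $G$. (3) Wire the selectors $s_i, s_{ij}$ together and to the gadget paths by edges, and use $\pi$ to force the visiting order so that the selectors are consulted in a fixed sequence; this is exactly the kind of bookkeeping used for the \W-hardness of POHPP parameterized by the width of the partial order. (4) Argue correctness in both directions: a multicolored clique yields consistent choices hence a $\pi$-extending Hamiltonian path, and conversely any such path induces, via the selectors, a choice of one vertex per color that is pairwise adjacent.

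The main obstacle I anticipate is the simultaneous control of two things that usually pull against each other: the partial order $\pi$ must be rigid enough to force the path to ``select and verify'' correctly, yet the underlying graph $H$ must stay within bounded \param{distance to path} — meaning after deleting only $\O(k^2)$ vertices, what remains is a genuine disjoint union of paths. In particular the gadgets cannot branch, so all the ``choice'' structure has to be pushed into the interaction between the path gadgets and the deletion set $W$ and into $\pi$, rather than into the graph topology. Getting each gadget to be a path while still forcing a unique selection, and then making those gadgets modules over $W$ (so that the linear-forest-modules bound also holds), is the delicate part; I would handle it by using the precedence constraints — rather than adjacency — to encode which index was selected, letting $\pi$ carry the combinatorics and keeping $H - W$ structurally trivial. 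A secondary technical point is ensuring the construction is also valid for POHCP, but since \param{distance to path} and \param{distance to linear forest modules} are both closed under adding a universal vertex, \cref{thm:path2cycle} transfers the hardness to the cycle version for free.
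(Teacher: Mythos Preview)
Your proposal is correct and follows essentially the same approach as the paper: reduce from \textsc{Multicolored Clique}, place $\O(k^2)$ selector/connector vertices in the deletion set $W$, make $H-W$ a single long path built from per-color selection subpaths and per-pair edge-verification subpaths, and use $\pi$ to encode the selection/verification combinatorics while a second ``cleanup'' phase (gated by a special vertex) sweeps up the unchosen vertices. The paper's construction matches your outline point for point, including the observation that dropping the edges joining consecutive gadget subpaths yields the \param{distance to linear forest modules} bound and that \cref{thm:path2cycle} handles POHCP.
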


\begin{proof}

\begin{figure}
    \centering
    \resizebox{\textwidth}{!}{
    \begin{tikzpicture}[xscale=1, clique/.style={draw, rounded corners}, multiedge/.style={line width=1.75}]
\footnotesize

        \node[clique, label=-45:$X^1$] (X1) at (0,0) {
        \begin{tikzpicture}
        \node[vertex, label=90:$x^1_1$] (1x11) at (0,0) {};
        \node[vertex, label=90:$\hat{x}^1_1$] (1hx11) at (0.5,0) {};
        \node[vertex, label=90:$x^1_2$] (1x12) at (1,0) {};
        \node[vertex, label=90:$\hat{x}^1_q$] (1x1q) at (1.75,0) {};
        \draw (1x11) -- (1hx11) -- (1x12) -- (1x1q);
        \node[fill=white, rectangle, rounded corners=false, inner sep=1pt] at (1.375,0) {\tiny $\dots$};
        \end{tikzpicture}
        };

        \node[clique, label=-45:$X^2$] (X2) [right=0.25cm of X1] {
        \begin{tikzpicture}
        \node[vertex, label=90:$x^2_1$] (2x11) at (0,0) {};
        \node[vertex, label=90:$\hat{x}^2_1$] (2hx11) at (0.5,0) {};
        \node[vertex, label=90:$x^2_2$] (2x12) at (1,0) {};
        \node[vertex, label=90:$\hat{x}^2_q$] (2x1q) at (1.75,0) {};
        \draw (2x11) -- (2hx11) -- (2x12) -- (2x1q);
        \node[fill=white, rectangle, rounded corners=false, inner sep=1pt] at (1.25,0) {\tiny $\dots$};
        \end{tikzpicture}
        };

        \node [right=0.1cm of X2] {$\dots$};

        \node[clique, label=-45:$X^k$] (Xk) [right=0.75cm of X2] {
        \begin{tikzpicture}
        \node[vertex, label=90:$x^k_1$] (3x11) at (0,0) {};
        \node[vertex, label=90:$\hat{x}^k_1$] (3hx11) at (0.5,0) {};
        \node[vertex, label=90:$x^k_2$] (3x12) at (1,0) {};
        \node[vertex, label=90:$\hat{x}^k_q$] (3x1q) at (1.75,0) {};
        \draw (3x11) -- (3hx11) -- (3x12) -- (3x1q);
        \node[fill=white, rectangle, rounded corners=false, inner sep=1pt] at (1.25,0) {\tiny $\dots$};
        \end{tikzpicture}
        };

        \node[vertex, label=90:$s^1$] (s1) [above left=0.5cm and -0.3cm of X1] {};
        \node[vertex, label=90:$\hat{t}^1$] (t1) [above=0.5cm of X1] {};
        \node[vertex, label=180:$\hat{s}^1$] (hs1) [below=0.5cm of X1] {};
        \node[vertex, label=90:$s^2$] (s2) [above right=0.25cm and 0.125cm of X1] {};
        \node[vertex, label=90:$\hat{s}^2$] (hs2) [above right=1cm and 0.125cm of X1] {};
        \node[vertex, label=90:$\hat{t}^2$] (t2) [above=0.5cm of X2] {};
        \node[vertex, label=90:$\hat{t}^k$] (tk) [above=0.5cm of Xk] {};
    
        \draw[multiedge] (t1) -- (X1);
        \draw[multiedge] (t2) -- (X2);
        \draw[multiedge] (tk) -- (Xk);
        \draw[multiedge] (hs1) -- (X1);
        \draw[multiedge] (s1) -- (X1);
        \draw[multiedge] (s2) -- (X1);
        \draw[multiedge] (hs2) -- (X1);
        \draw[multiedge] (s2) -- (X2);
        \draw[multiedge] (hs2) -- (X2);

        \node[clique, label=-45:$W^{1,2}$] (W12) [right=0.25cm of Xk] {
        \begin{tikzpicture}
        \node[vertex, label=90:$w^{1,2}_{1,r_1}$] (1w11) at (0,0) {};
        \node[vertex, label=90:$w^{1,2}_{q,r_d}$] (1w1q) at (1,0) {};
        \draw (1w11) -- (1w1q);
        \node[fill=white, rectangle, rounded corners=false, inner sep=1pt] at (0.35,0) {\tiny $\dots$};
        \end{tikzpicture}
        };

        \node[clique, label=-45:$W^{1,3}$] (W13) [right=0.25cm of W12] {
        \begin{tikzpicture}
        \node[vertex, label=90:$w^{1,3}_{1,r_1}$] (2w11) at (0,0) {};
        \node[vertex, label=90:$w^{1,3}_{q,r_d}$] (2w1q) at (1,0) {};
        \draw (2w11) -- (2w1q);
        \node[fill=white, rectangle, rounded corners=false, inner sep=1pt] at (0.35,0) {\tiny $\dots$};
        \end{tikzpicture}
        };

        \node [right=0.1cm of W13] {$\dots$};

        \node[clique, label=-45:$W^{k-1,k}$] (Wk-1k) [right=0.75cm of W13] {
        \begin{tikzpicture}
        \node[vertex, label=90:$w^{k-1,k}_{1,r_1}$] (3w11) at (0,0) {};
        \node[vertex, label=90:$w^{k-1,k}_{q,r_d}$] (3w1q) at (1,0) {};
        \draw (3w11) -- (3w1q);
        \node[fill=white, rectangle, rounded corners=false, inner sep=1pt] at (0.35,0) {\tiny $\dots$};
        \end{tikzpicture}
        };

        \node[vertex, label=90:$\hat{d}^{1,2}$] (d12) [above=0.5cm of W12] {};
        \node[vertex, label=90:$\hat{d}^{1,3}$] (d13) [above=0.5cm of W13] {};
        \node[vertex, label=90:$\hat{d}^{k-1,k}$] (dk-1k) [above=0.5cm of Wk-1k] {};
        \node[vertex, label=90:$c^{1,2}$] (c12) [above right=0.25cm and 0.125cm of Xk] {};
        \node[vertex, label=90:$\hat{c}^{1,2}$] (hc12) [above right=1cm and 0.125cm of Xk] {};
        \node[vertex, label=90:$c^{1,3}$] (c13) [above right=0.25cm and 0.125cm of W12] {};
        \node[vertex, label=90:$\hat{c}^{1,3}$] (hc13) [above right=1cm and 0.125cm of W12] {};
        \node[vertex, label=0:$z$] (z) [below=0.5cm of Wk-1k] {};
        
        \draw[multiedge] (d12) -- (W12);
        \draw[multiedge] (d13) -- (W13);
        \draw[multiedge] (dk-1k) -- (Wk-1k);
        \draw[multiedge] (c12) -- (Xk);
        \draw[multiedge] (hc12) -- (Xk);
        \draw[multiedge] (c12) -- (W12.135);
        \draw[multiedge] (hc12) -- (W12.115);
        \draw[multiedge] (c13) -- (W12.45);
        \draw[multiedge] (hc13) -- (W12.65);
        \draw[multiedge] (c13) -- (W13.135);
        \draw[multiedge] (hc13) -- (W13.115);
        \draw[multiedge] (z) -- (Wk-1k);
        \draw (z) -- (hs1);
    \end{tikzpicture}
    }
    \caption{Construction of \cref{thm:w1-d2p}. If a vertex is adjacent to a box via a thick edge, then the vertex is adjacent to all vertices in that box. For \param{distance to path}, the ends of consecutive subpaths of the $X^i$ and $W^{i,j}$ are adjacent, for \param{distance to linear forest modules} they are not adjacent.}
    \label{fig:w1-d2p}
\end{figure}
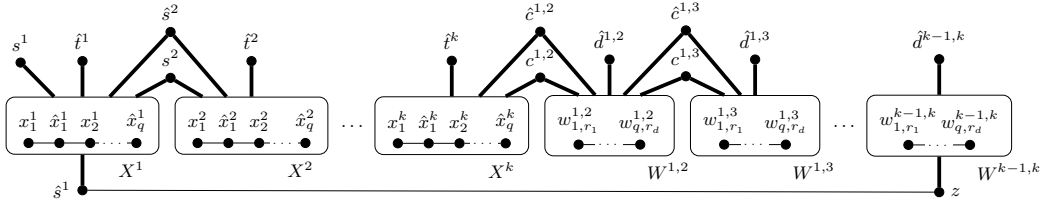

    We reduce the \textsc{Multicolored Clique Problem} to  POHPP parameterized by \param{distance to path}. The hardness of POHCP follows then by \cref{thm:path2cycle}. Let $G$ be an instance of the MCP. The vertex set of $G$ consists of $k$ color classes $V_1, \dots, V_k$ and every color class consists of vertices $v^i_1, \dots, v^i_q$.

    We construct a graph $G'$ as follows (see \cref{fig:w1-d2p}). The vertex set is partitioned in the following gadgets:
    \begin{description}
    \item[Selection Gadget] For every color class $i \in [k]$, we have a selection gadget consisting of a vertex $\hat{t}^i$ and a path $X_i$ that contains for every $v^i_p \in V_i$ a vertex $x^i_p$. These vertices are ordered by their index $p$ and after every vertex $x^i_p$ (and, thus, before $x^i_{p+1}$) there is a vertex $\hat{x}^i_p$ in that path. The vertex $\hat{t}^i$ is adjacent to all vertices in the path $X^i$.
    \item[Verification Gadget] For each pair $i,j \in [k]$ with $i < j$, we have a verification gadget consisting of a vertex $\hat{d}^{i,j}$ and a path $W^{i,j}$ containing for every edge $v^i_pv^j_r \in E(G)$ a vertex $w^{i,j}_{p,r}$. We do not fix an ordering of these vertices in that path. The vertex $\hat{d}^{i,j}$ is adjacent to all vertices in the path $W^{i,j}$.
    \end{description}

    Next we describe how these gadgets are connected to each other. We order the subpaths described in the selection and verification gadgets as follows: \[X^1, X^2, \dots, X^k, W^{1,2}, W^{1,3}, \dots, W^{1,k}, W^{2,3}, W^{2,4}, \dots, W^{k-1,k}.\]
    The last vertex of one of the paths is adjacent to the first vertex of the succeeding path.
    Therefore, these paths form one large path $\Psi$ in $G'$.
    Additionally, we have the following vertices in $G'$.
    First, we have $s^1$ and $\hat{s}^1$ that are adjacent to all vertices in $X^1$. For every $i \in [k]$ with $i > 1$, we have a vertex $s^i$ and a vertex $\hat{s}^i$ that both are adjacent to all vertices in $X^i$ and $X^{i-1}$. 
    For all $i,j \in [k]$ with $i < j$, there are vertices $c^{i,j}$ and $\hat{c}^{i,j}$ that are adjacent to all vertices in $W^{i,j}$ and to all vertices in the predecessor path of $W^{i,j}$ in the ordering described above.
    Finally, we have a vertex $z$ that is adjacent to all vertices in $W^{k-1,k}$ and to $\hat{s}^1$. Observe that the graph $G' - V(\Psi)$ contains $3k + 3\binom{k}{2} + 1$ vertices. Therefore, the \param{distance to path} of $G'$ is $\O(k^2)$.

    We define $Y$ as the set containing all the vertices defined above that have a hat in their name, i.e., $Y := \{\hat{s}^i \mid i \in [k]\} \cup \{\hat{t}^i \mid i \in [k-1]\} \cup \{\hat{c}^{i,j}, \hat{d}^{i,j} \mid i,j \in [k], i < j\}$. The partial order $\pi$ is the reflexive and transitive closure of the following constraints:
    \begin{enumerate}[(P1)]
        \item $s^1 \prec v$ for all $v \in V(G') \setminus \{s^1\}$,\label{p0}
        \smallskip
        \item $z \prec y$ for all $y \in Y$,\label{p6}
        \smallskip
        \item $x^i_p \prec w^{i,j}_{p,r}$ for all $i,j \in [k]$ with $i < j$ and all $p,r \in [q]$ with $v^i_pv^j_r \in E(G)$\label{p4}
        \smallskip
        \item $x^j_r \prec w^{i,j}_{p,r}$ for all $i,j \in [k]$ with $i < j$ and all $p,r \in [q]$ with $v^i_pv^j_r \in E(G)$\label{p5}
    \end{enumerate}

    \begin{claim}\label{claim:w1-d2p1}
        If $G$ has a multicolored clique $\{v^1_{p_1}, \dots, v^k_{p_k}\}$, then $G'$ has a $\pi$-extending Hamiltonian path $\cP$.
    \end{claim}

    \begin{claimproof}
        We start in $s^1$. Now we visit $x^1_{p_1}$ and go to $s^2$. Then we go to $x^2_{p_2}$. We repeat this process until we reach $x^k_{p_k}$. Next we visit $c^{1,2}$ and then $w^{1,2}_{p_1,p_2}$. Note that this is possible since both $x^1_{p_1}$ and $x^2_{p_2}$ are already visited and, thus, the constraints given in \pef{p4} and \pef{p5} are fulfilled. We then go to $c^{1,3}$ and visit $w^{1,3}_{p_1, p_3}$. We repeat this procedure until we reach $w^{k-1,k}_{p_{k-1}, p_k}$. Then we go to $z$ and then to $\hat{s}_1$. Now we traverse the path $X^1$. When we reach $\hat{x}^1_{p_1 - 1}$, we cannot visit the next vertex on the path as this vertex has already been visited. Therefore, we use $\hat{t}^1$ to jump over that vertex in $X^1$. When we have traversed $X^1$ completely, we visit $\hat{s}^2$ and then traverse $X^2$ in the same way as $X^1$. We repeat this procedure for all $X^i$ and also for all $W^{i,j}$, where we use $\hat{d}_{i,j}$ to jump over visited vertices.
    \end{claimproof}

    \begin{claim}\label{claim:w1-d2p2}
        If there is a $\pi$-extending Hamiltonian path $\cP$ in $G'$, then there is a multicolored clique $\{v^1_{p_1}, \dots v^k_{p_k}\}$ in $G$.
    \end{claim}

    \begin{claimproof}
        The path $\cP=(a_1,a_2,\dots a_n)$ has to start in $a_1=s^1$, due to \pef{p0}. Then \(a_2\) has to be some vertex $x^1_{p_1}$ for some $p_1 \in [q]$ since $s^1$ has no other neighbors. The next vertex (\(a_3\)) cannot be a neighbor of $x^1_{p_1}$ on the path $\Psi$, due to \pef{p6}. Hence, \(a_3\) has to be $s^2$. Its successor \(a_4\) either could be a vertex $x^2_{p_2}$ or a vertex $x^1_{p'}$. In the latter case, all the unvisited neighbors of $x^1_{p'}$ are not allowed to be taken next as they are forced to be to the right of $z$ by \pef{p6}. Therefore, $a_4 = x^2_{p_2}$ for some $p_2 \in [q]$. This argument can be repeated for every $i \in [k]$. Thus, the subpath of $\cP$ between $s^1$ and $c^{1,2}$ contains for any $i \in [k]$ the vertex $s^i$ and exactly one vertex $x^i_{p_i}$ for some $p_i \in [q]$. We claim that the vertices $C = \{v^1_{p_1}, \dots, v^k_{p_k}\}$ form a multicolored clique in $G$. 

        Using the same argument as above, the path $\cP$ has to go from $c^{1,2}$ to some vertex $w^{1,2}_{p,r}$. Due to \pef{p4}, $p$ must be equal to $p_1$. By \pef{p5}, $r$ has to be $p_2$. Therefore, $v^1_{p_1}$ and $v^2_{p_2}$ are adjacent. This observation also implies that the next vertex cannot be on $W^{1,2}$ but has to be $c^{1,3}$. Repeating this argumentation, we can show that $C$ in fact forms a multicolored clique in $G$.
    \end{claimproof}

    Obviously, the graph $G'$ can be constructed in $\O((kq)^2)$ time. Combining this with \cref{claim:w1-d2p1}, \cref{claim:w1-d2p2} and the fact that the \param{distance to path} of $G'$ is $\O(k^2)$, shows that $(G', \pi)$ is a valid \FPT{} reduction. This finalizes the proof for \param{distance to path}. 

    If we remove the edges between succeeding subpaths of $\Psi$ in $G'$, the graph induced by the vertices of $\Psi$ is a linear forest whose components are modules in $G'$. Therefore, the \param{distance to linear forest modules} of this graph is then $\O(k^2)$. Observe that the removed edges are not used in a Hamiltonian path in the proof above.  Therefore, the updated construction is a valid \FPT{} reduction for \param{distance to linear forest modules}.
\end{proof}

\subsection{Algorithms}

We start this section with an observation that follows from the fact that deleting a set of $k$~vertices from a traceable graphs produces at most $k+1$ components. This implies the following.

\begin{observation}
    Let $G$ be a graph with \param{vertex cover number}~$k$. If $G$ is traceable, then $G$ has at most $2k+1$ vertices. If $G$ is Hamiltonian, then $G$ has at most $2k$ vertices.
\end{observation}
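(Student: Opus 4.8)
The plan is to take a minimum vertex cover $C \subseteq V(G)$ with $|C| = k$ as the set of $k$ vertices to delete, so that $I := V(G) \setminus C$ is an independent set and $G - C = G[I]$ is edgeless. Everything then reduces to a counting argument that compares the number of connected components of $G - C$ obtained in two different ways: once from the edgelessness of $G[I]$, and once from the structure of a Hamiltonian path (or cycle).

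First I would treat the traceable case. Let $P = (v_1, \dots, v_n)$ be a Hamiltonian path of $G$. Deleting the $k$ vertices of $C$ from $P$ splits it into at most $k+1$ subpaths, so $P - C$ has at most $k+1$ connected components. Since $P$ is a spanning subgraph of $G$, the graph $P - C$ is a spanning subgraph of $G - C$, and adding edges can only merge components; hence $G - C$ has at most $k+1$ components as well. On the other hand, $G - C = G[I]$ has no edges, so each of its $|I| = n - k$ vertices is its own component. Therefore $n - k \le k + 1$, i.e., $n \le 2k + 1$.

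For the Hamiltonian case the argument is the same, except that we start from a Hamiltonian cycle and use the sharper fact that deleting $k \ge 1$ vertices from a cycle on $n$ vertices leaves at most $k$ components (the first deletion turns the cycle into a single path, and each further deletion increases the component count by at most one). So $G - C$ has at most $k$ components, and comparing with the $n - k$ singleton components of $G[I]$ yields $n - k \le k$, i.e., $n \le 2k$. The degenerate case $k = 0$ is trivial: then $G$ is edgeless, so it is traceable only if $n \le 1$ and never Hamiltonian.

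There is essentially no hard step here; this is exactly the observation advertised in the preamble. The only points that need a little care are the off-by-one counts when deleting vertices from a path versus a cycle, and the remark that passing from the spanning subgraph $P$ (respectively the Hamiltonian cycle) to $G$ can only merge components, so that the component bound for the sparse spanning subgraph transfers to $G - C$.
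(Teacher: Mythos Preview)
Your argument is correct and is exactly the one the paper has in mind: it simply states (without further detail) that the observation follows from the fact that deleting $k$ vertices from a traceable graph leaves at most $k+1$ components, and your proof spells out precisely this counting argument for both the path and cycle cases.
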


This fact implies that traceable graphs with bounded \param{vertex cover number} have bounded values for all other graph width parameters (see \cref{fig:parameters}). Furthermore, the observation directly gives a kernel with a linear number of vertices and an \FPT{} algorithm for MinPOHPP and MinPOHCP.

We can give a similar result for graphs of bounded \param{treedepth}.

\begin{observation}\label{obs:treedepth}
    If a traceable graph has \param{treedepth} $k$, then $G$ has less than $2^k$ vertices.
\end{observation}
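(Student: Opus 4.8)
The plan is a straightforward induction on the treedepth $k$. The key point is that deleting the root of a treedepth decomposition drops its height by one, while deleting one vertex from a Hamiltonian path splits the graph into at most two traceable pieces.

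First, since a traceable graph is connected, any treedepth decomposition of $G$ of height $k$ must be a single rooted tree $T$: a decomposition with two components would force no edge of $G$ to run between the two components, contradicting connectedness. Let $r$ be the root of $T$. Removing $r$ from $T$ leaves a forest of height at most $k-1$ which is still a valid treedepth decomposition of $G-r$, so every component of $G-r$ has treedepth at most $k-1$. Now write a Hamiltonian path of $G$ as $(v_1,\dots,v_n)$ with $r=v_i$. Deleting $r$ leaves the two (possibly empty) subpaths $(v_1,\dots,v_{i-1})$ and $(v_{i+1},\dots,v_n)$; consecutive vertices of each subpath are adjacent in $G-r$, so each subpath lies entirely within one component of $G-r$. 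Since these two subpaths together cover all of $V(G-r)$, the graph $G-r$ has at most two components, each spanned by one of the subpaths, hence each traceable.

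By the induction hypothesis, each of the at most two components of $G-r$ has fewer than $2^{k-1}$ vertices, so $n \le 1 + 2\,(2^{k-1}-1) = 2^k - 1 < 2^k$; the base case $k=1$ is the single-vertex graph. There is essentially no obstacle here; one only has to be careful with the degenerate cases (zero or one component, or an empty subpath) and with pinning down the ``height'' convention so that the base case matches the arithmetic. An even shorter alternative is to combine subgraph-monotonicity of treedepth with the classical identity $\operatorname{td}(P_n)=\lceil\log_2(n+1)\rceil$: a Hamiltonian path of $G$ is a spanning subgraph, so $k\ge\lceil\log_2(n+1)\rceil\ge\log_2(n+1)$, which rearranges to $n\le 2^k-1<2^k$.
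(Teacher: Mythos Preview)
Your alternative at the end---monotonicity of treedepth under subgraphs combined with $\operatorname{td}(P_n)=\lceil\log_2(n+1)\rceil$---is exactly the paper's proof, and it is correct.

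The inductive argument you lead with has a small gap. You conclude that $G-r$ has at most two components, ``each spanned by one of the subpaths, hence each traceable.'' This is fine when $G-r$ has two components (each must then coincide with one subpath), but it fails when $G-r$ is connected and $r$ is an interior vertex of the Hamiltonian path: the single component is then covered by \emph{two} disjoint subpaths, and nothing guarantees it is traceable. A concrete instance: take $K_{1,3}$ with centre $c$ and leaves $a,b,d$, add a vertex $r$ adjacent to $a$ and $b$; then $G$ has the Hamiltonian path $(d,c,a,r,b)$, an optimal treedepth decomposition of height~$3$ rooted at $r$, yet $G-r=K_{1,3}$ is connected and not traceable.

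The fix is immediate: apply the induction hypothesis to the two subpaths themselves rather than to the components of $G-r$. Each subpath is trivially traceable, and as a subgraph of $G-r$ it has treedepth at most $k-1$ by the monotonicity you already invoke later; so each has fewer than $2^{k-1}$ vertices and the arithmetic goes through unchanged. With that tweak the induction is a self-contained proof that avoids citing the closed form for $\operatorname{td}(P_n)$, though of course it is essentially re-deriving the lower bound half of that formula along the way.
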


\begin{proof}
   \param{Treedepth} is monotone {\cite[Lemma~6.2]{nesetril2012sparsity}}, i.e, considering subgraphs does not increase the \param{treedepth}. Furthermore, the \param{treedepth} of a path with $n$ vertices is $\lceil \log_2(n+1) \rceil$  {\cite[Equation~6.2]{nesetril2012sparsity}}.
\end{proof}

Again, this observation implies that for traceable graphs with bounded \param{treedepth} all other graph width parameters are bounded (see \cref{fig:parameters}). As both MinPOHPP and MinPOHCP can be solved in exponential time~\cite{beisegel2024computing}, there are simple double-exponential \FPT{} algorithms for \param{treedepth}. For the \param{feedback edge set number} we can give a single-exponential \FPT{} algorithm. As was shown in the arXiv version of~\cite{demaine2019reconfiguring}, an $n$-vertex graph of \param{feedback edge set number}~$k$ has at most $2^k \cdot \binom{n}{2}$ different paths. Enumerating these paths leads to the following result.

\begin{theorem}\label{thm:fes}
     MinPOHPP and MinPOHCP can be solved in $2^k \cdot n^{\O(1)}$ time on an $n$-vertex graph of \param{feedback edge set number}~$k$.
\end{theorem}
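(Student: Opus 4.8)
The plan is to solve both problems by brute-force enumeration of all paths of the input graph, exploiting that a graph with few non-tree edges has few paths. First I would recall the combinatorial bound cited above: an $n$-vertex graph $G$ with \param{feedback edge set number}~$k$ has at most $2^k\binom{n}{2}$ distinct paths, hence $\O(2^k\cdot n^2)$ \emph{ordered} paths. The additional point needed to turn this into an algorithm is that these ordered paths can actually be \emph{listed} within the claimed time bound. For this I would run a depth-first backtracking that, for each start vertex, builds an ordered path by repeatedly appending a neighbour not yet used. Every recursive call corresponds to a distinct ordered simple path, and, conversely, every ordered simple path arises as the path held by exactly one call, because every prefix of a simple path is again a simple path. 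Thus the recursion tree has $\O(2^k\cdot n^2)$ nodes, each handled in $n^{\O(1)}$ time (finding extensions, testing membership), so the whole list is produced in $2^k\cdot n^{\O(1)}$ time. Note that a feedback edge set need not even be computed explicitly; only the bound on the number of paths is used.

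Next I would process the enumerated ordered paths for \textsc{MinPOHPP}. I discard every path that does not contain all $n$ vertices, and every path $(v_1,\dots,v_n)$ for which some constraint $v_i\prec_\pi v_j$ with $i>j$ occurs; both checks cost $\O(n^2)$ time per path. For each surviving path I sum its edge weights and keep one of minimum total weight; for the unweighted decision version \textsc{POHPP} it suffices to report whether any path survives. Since this adds only a polynomial factor per enumerated path, the total running time remains $2^k\cdot n^{\O(1)}$. For the cycle versions I would invoke \cref{thm:cycle2path}: \param{feedback edge set number} is a graph width parameter, so a $2^k\cdot n^{\O(1)}$ algorithm for \textsc{MinPOHPP} yields a $2^k\cdot n^{\O(1)+2}=2^k\cdot n^{\O(1)}$ algorithm for \textsc{MinPOHCP}. (One could also argue directly by keeping, among the enumerated Hamiltonian $\pi$-extending ordered paths, those whose endpoints are adjacent in $G$, and minimising the path weight plus the weight of the closing edge.)

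The only step requiring genuine care is the enumeration: the cited reference supplies the cardinality bound $2^k\binom{n}{2}$, but to obtain an algorithm one must verify that the backtracking search performs work only proportional (up to a polynomial factor) to the number of paths, which is exactly what the prefix-closedness of simple paths gives. Everything else — testing Hamiltonicity, testing the linear-extension property, and summing edge weights — is routine polynomial-time bookkeeping, and the reduction to the cycle case is handled by \cref{thm:cycle2path}.
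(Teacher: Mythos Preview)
Your proposal is correct and follows exactly the approach the paper takes: cite the $2^k\binom{n}{2}$ bound on the number of simple paths in a graph of \param{feedback edge set number}~$k$, enumerate all ordered paths, and filter for $\pi$-extending Hamiltonian ones of minimum weight. The paper states this in a single sentence and leaves the enumeration implicit; your backtracking argument with prefix-closedness is a reasonable way to make that step explicit, and your handling of the cycle case via \cref{thm:cycle2path} matches the paper's general strategy.
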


Unless $\W = \FPT$, this result cannot be extended to \param{feedback vertex set number} since POHPP and POHCP are \W-hard for this parameter, due to \cref{thm:w1-d2p}. However, we can give an \XP{} algorithm for MinPOHPP when parameterized by \param{distance to outerplanar}, a more general parameter than \param{feedback vertex set number}. Before we present this algorithm, we first focus on a more restricted case that allows an \FPT{} algorithm. 

It has been shown by Beisegel et al.\ \cite{beisegel2024computing} that  MinPOHPP can be solved in $\O(n^2)$ time on outerplanar graphs. The key ingredient of that algorithm is the observation that the vertices of the outer face contained in a prefix of a Hamiltonian path must be an \emph{interval} of the outer face, i.e., a consecutive part of the closed walk that forms the face (see Lemma~5.2 in~\cite{beisegel2024computing}). 

\begin{lemma}[Beisegel et al.~\cite{beisegel2024computing}]\label{lemma:op-interval}
    Let $G$ be a planar graph and let $C$ be a face of a planar embedding of $G$. Then $V(C) \cap V(P)$ forms an interval of $C$.
\end{lemma}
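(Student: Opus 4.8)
The plan is to prove the contrapositive by a Jordan-curve argument. Here $P$ is a prefix $(v_1,\dots,v_m)$ of a Hamiltonian path $(v_1,\dots,v_n)$ of $G$ — the setting in which the lemma is applied — and I would let $Q := (v_{m+1},\dots,v_n)$ be the complementary suffix. The only properties of $P$ and $Q$ I would use are that each is a connected subgraph of $G$, that they are vertex-disjoint, and that $V(P)\cup V(Q)=V(G)$; in particular, writing $A := V(C)\cap V(P)$ and $B := V(C)\cap V(Q)$, the boundary walk $V(C)$ is partitioned into $A$ and $B$. If $A=\emptyset$, $A=V(C)$, or $m\in\{0,n\}$, there is nothing to prove, so I would assume $\emptyset\neq A\subsetneq V(C)$, whence $B\neq\emptyset$ as well.

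First I would reduce to the case in which the boundary walk of $C$ is a cycle $J$: when $G$ has cut vertices occurring more than once on the walk, one splits $G$ at each such vertex (one copy per incident block along the walk), which changes neither the hypothesis nor the meaning of ``interval''; alternatively one restricts attention to the block of $G$ in which the first deviation from an interval occurs. Granting this, if $A$ were not an interval of $J$ then, travelling once around $J$, one encounters at least two maximal runs of $A$-vertices separated by at least two maximal runs of $B$-vertices; choosing one vertex from each of two $A$-runs and one from each of the two $B$-runs between them yields vertices $a,a'\in A$ and $b,b'\in B$ appearing on $J$ in the cyclic order $a,b,a',b'$.

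The heart of the argument is then topological. Since $G$ is embedded without crossings and $C$ is a face, $G$ avoids the open region $C$; as $J\subseteq G$ and $G$ is connected, all of $G$ lies on the single other side of the Jordan curve $J$, and (passing to the sphere and using the Schoenflies theorem) the closure $D$ of that side is a closed topological disk with boundary circle $J$ that contains $G$. The path $P$, being homeomorphic to an interval, contains a sub-arc $\alpha\subseteq D$ with endpoints $a,a'$, and $Q$ contains a sub-arc $\beta\subseteq D$ with endpoints $b,b'$. Because the embedding is proper (distinct edges meet only at common endpoints, and an edge meets no non-incident vertex) and $V(P)\cap V(Q)=\emptyset$, the point sets $P$ and $Q$ are disjoint, so $\alpha\cap\beta=\emptyset$. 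But $\alpha$ and $\beta$ are disjoint arcs in the disk $D$ whose four endpoints lie on $\partial D$ in the interleaved order $a,b,a',b'$, contradicting the standard fact — a consequence of the Jordan curve theorem — that two such arcs must intersect. Hence $A$ is an interval of $C$.

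I expect the genuinely delicate points to be the reduction to a $2$-connected $G$, i.e.\ handling boundary \emph{walks} with repeated cut vertices rather than cycles and pinning down what an ``interval'' means there, together with making the ``$D$ is a disk'' claim precise when $C$ is a bounded face rather than the outer one (this is where passing to $S^2$ and Schoenflies is needed). By contrast, the interleaving-plus-crossing step is a one-line appeal to planarity, and the structural facts about prefixes and suffixes of a Hamiltonian path being connected and covering $V(G)$ are immediate.
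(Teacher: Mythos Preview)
Your argument is correct: the Jordan-curve/interleaving approach you outline is exactly the natural way to prove this statement, and the steps you describe (the prefix $P$ and suffix $Q$ are connected, vertex-disjoint, and cover $V(G)$; a non-interval forces interleaved endpoints $a,b,a',b'$ on the boundary cycle; the sub-arcs $\alpha\subseteq P$ and $\beta\subseteq Q$ in the closed disk on the non-face side of $J$ would then have to cross) go through as written. Your caveats about the non-2-connected case and about invoking Schoenflies on the sphere are also well placed.

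There is, however, nothing to compare against here: the paper does not prove this lemma at all. It is quoted verbatim from~\cite{beisegel2024computing} (their Lemma~5.2) and used as a black box; the present paper only remarks that the statement holds for arbitrary faces of arbitrary plane graphs, not just the outer face of an outerplanar graph. So your write-up is not an alternative to the paper's proof but a self-contained proof of a cited result, and as such it is sound.
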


Note that this result does not only hold for outerplanar graphs but for all plane graphs and all faces. It is important to note that this does not mean that the prefix contains this interval as a subpath. In fact, there can be vertices in between that are not part of the boundary of the face. This general result allows us to extend the polynomial-time algorithm for outerplanar graphs to an \FPT{} algorithm for plane graphs parameterized by the number of vertices that are not on the outer face. Note that our algorithm uses ideas similar to those of \cite{deineko2006traveling}.

\begin{theorem}\label{thm:inner-outerplanar}
    Given a plane graph $G$ with $k$ vertices not on the outer face, we can solve  MinPOHPP on $G$ in $\O(2^k k n^3)$ time and MinPOHCP on $G$ in $\O(2^k k n^5)$ time.
\end{theorem}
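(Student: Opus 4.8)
The plan is to solve MinPOHPP by a dynamic program over prefixes of the desired Hamiltonian path, exploiting \cref{lemma:op-interval}, and to derive the bound for MinPOHCP from \cref{thm:cycle2path}. Write $I$ for the set of the $k$ vertices of $G$ not on the outer face, and let $C=(c_1,\dots,c_L)$ be the closed walk bounding the outer face, so $L=\O(n)$. Call a set of consecutive vertices of $C$ (indices read cyclically), possibly empty or all of $V(C)$, an \emph{interval}. By \cref{lemma:op-interval}, for every prefix $P$ of a Hamiltonian path of $G$ the set $V(C)\cap V(P)$ is an interval. If one appends a single vertex $w\in V(C)$ to such a prefix $P$, the new outer part $(V(C)\cap V(P))\cup\{w\}$ must again be an interval, so $w$ is the vertex immediately preceding or immediately following the current interval on $C$, and $w$ becomes the new last vertex of the prefix. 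Consequently the vertex set of any prefix of a Hamiltonian path is exactly $S\cup J$ for some $S\subseteq I$ and some interval $J$, and is completely determined by the pair $(S,J)$; this is what keeps the state space small.

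I would therefore use dynamic programming states $(S,a,b,v)$ consisting of a subset $S\subseteq I$, an interval $J=\{c_a,\dots,c_b\}$ of $C$, and a vertex $v\in S\cup J$, and store in $D[S,a,b,v]$ the minimum weight of a path of $G$ that visits exactly $S\cup J$, ends in $v$, and whose vertex order respects $\pi$ (that is, no appended vertex has a $\pi$-predecessor appended after it), with value $\infty$ if no such path exists. The base cases are the one-vertex prefixes: for each $\pi$-minimal vertex $v$ the corresponding state gets weight $0$. From a state $(S,a,b,v)$ one relaxes, for each vertex $w$ that may be appended, the state obtained by adding $w$, increasing the cost by the weight of the edge $vw$. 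The candidates for $w$ are the inner vertices in $I\setminus S$ adjacent to $v$ together with the at most two outer vertices bordering $J$ that are adjacent to $v$, so there are only $\O(k)$ of them, and $w$ may be appended iff $vw\in E(G)$ and every $\pi$-predecessor of $w$ lies in the visited set $S\cup J$. Since that visited set is spelled out by the state, this precedence test runs in constant time after $\O(n^2)$ preprocessing: the inner predecessors of each vertex are stored as a $k$-bit mask to be tested against $S$, and for the outer predecessors one precomputes, for every vertex and every position on $C$, the next position along $C$ holding an outer predecessor, reducing ``all outer predecessors of $w$ lie in $J$'' to one comparison. Processing states by increasing number of visited vertices $|S\cup J|$ turns this into a DAG; the optimum for MinPOHPP is $\min_v D[I,a,b,v]$ over all states with $S=I$ and $J=V(C)$, and the instance is infeasible exactly when this value is $\infty$.

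Counting, there are $2^k$ choices for $S$, $\O(n^2)$ intervals, and $\O(n)$ choices for the endpoint, hence $\O(2^k n^3)$ states, each handled in $\O(k)$ time; together with the $\O(n^2)$ preprocessing this gives the claimed $\O(2^k k n^3)$ running time for MinPOHPP. The bound $\O(2^k k n^5)$ for MinPOHCP then follows directly from \cref{thm:cycle2path}, applied with $f(k)=2^k k$ and $g(k)=3$, since the reduction there works on the same plane graph and hence with the same number of inner vertices. I expect the main obstacle to be not a single deep step but the structural bookkeeping around the interval: checking carefully that appending an outer vertex can only extend the current interval at an end, arranging the precedence and adjacency tests so that they truly run in constant time (keeping the total cubic rather than quartic in $n$), and handling the degenerate cases — the empty interval, vertices occurring several times on the boundary walk when $G$ has cut vertices, and the termination condition ``all of $V(C)$ visited''.
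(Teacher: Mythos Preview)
Your approach is essentially the paper's: both run a dynamic program over states consisting of an interval of the outer face, a subset of the $k$ inner vertices, and an endpoint, using \cref{lemma:op-interval} to justify that every prefix has this shape, and both derive the cycle bound from \cref{thm:cycle2path}. The only divergence is bookkeeping --- the paper first reduces to the 2\nobreakdash-connected case (so $C$ is a simple cycle) and observes that the endpoint must be an inner vertex or one of the two interval ends, giving $\O(2^k k n^2)$ states with an $\O(n)$ minimality check each, whereas you allow $\O(n)$ endpoints but argue $\O(k)$ work per state via precomputed predecessor tables; both routes land at $\O(2^k k n^3)$.
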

\begin{proof}
    We only sketch the idea of the algorithm as it is a quite straightforward adaption of Algorithm~2 in~\cite{beisegel2024computing}. For details, we refer to this publication. First note that considering induced subgraphs does not increase the number of inner vertices of an embedding. Hence, by \cite{beisegel2024computing}, it suffices to deal with 2-connected graphs.
    Let $C = (v_0, \dots, v_\ell)$ be the outer face of the planar embedding $P$ and $W$ be the set of vertices that are not on the outer face. Since $G$ is 2-connected, the face $C$ forms a cycle. Furthermore, let $c$ be the cost function on the edges of $G$.

    The idea of Algorithm~2 in~\cite{beisegel2024computing} is to use a dynamic programming approach. To this end, the authors consider tuples $(a,b,\omega)$, which represent prefixes of Hamiltonian paths. Here, $a$ and $b$ are the indices of the endpoints of the interval on $C$ associated with that prefix (see~\cref{lemma:op-interval}) and $\omega$ is either 1 or 2 depending on whether the prefix ends in $v_a$ or $v_b$. For the problem considered here, we simply need to adjust these to tuples of the form $(a,b,X,t)$, where $a$ and $b$ again represent the indices of the endpoints of the interval on $C$ associated with the prefix, $X \subseteq W$ is the set of inner vertices in the prefix and $t$ forms the endpoint of the prefix, respectively. There are at most $n^2$ intervals and at most $2^k$ subsets of $W$. The endpoint $t$ either has to be a vertex of $X$ or one of the vertices $v_a$ and $v_b$. Therefore, there are at most $(k+2)$ endpoints and the total number of tuples is bounded by $2^k (k+2) n^2$. Similar as in Algorithm~2 of \cite{beisegel2024computing}, we compute for every tuple $(a,b,X,t)$ the minimal cost $M(a,b,X,t)$ of an ordered path $\cP$ of $G$ fulfilling the following properties (or $\infty$ if no such path exists):
\begin{enumerate}[(i)]
  \item $\cP$ consists of the vertices in the interval between $v_a$ and $v_b$ of $C$ and in the set $X$,\label{cond:outer1}
  \item $\cP$ is a prefix of a linear extension of $\pi$,\label{cond:outer2}
  \item $\cP$ ends in $t$.\label{cond:outer3}
\end{enumerate} 
This is done inductively by the size of $\cP$. Let $a'$, $b'$, and $X'$ be the updated values if $t$ is removed from the potential prefix. We first have to check whether $t$ is minimal in $\pi$ if all the vertices of $[a',b']$ and $X'$ have been visited. This costs $\O(n)$ time. If this is not the case, we set the $M$-value to $\infty$. Otherwise, we check the $M$-values for all possible vertices $t'$ that are before $t$ in the prefix. Note that $t'$ can only be an element of $X'$ or one of the two vertices $v_{a'}$ and $v_{b'}$, due to \cref{lemma:op-interval}. For each choice of $t'$ we compute the value $M(a',b',X',t') + c(tt')$ and set the $M$-value of $(a,b,X,t)$ to the minimum of these values. Note that this can be done in $\O(k) \subseteq \O(n)$ time using an adjacency matrix.

Summing up, for each of the $\O(2^kkn^2)$ tuples we need $\O(n)$ time which leads to the overall running time of $\O(2^kkn^3)$. The proof of the correctness of the algorithm follows along the lines of the proof of Theorem~5.3 in \cite{beisegel2024computing}. The running time for MinPOHCP follows by \cref{thm:cycle2path}.
\end{proof}

The idea of that algorithm can also be used to give an \XP{} algorithm for \param{distance to outerplanar}. The crucial difference in the complexity is based on the fact that there is not only one interval of the outer face used by a prefix of a Hamiltonian path but at most $k+1$. To prove this, we need the following definitions. 

Let $C$ be a closed walk of a plane graph $G$. A subgraph $H$ lies \emph{inside} of $C$ if all vertices and edges of $H$ are elements of $C$ or are placed in a bounded region of $\mathbb{R}^2 \setminus C$. The subgraph $H$ lies \emph{strictly inside} of $C$ if it lies inside of $C$ and does not contain a vertex of $C$. A subgraph $H$ lies \emph{outside} of $C$ if all vertices and edges of $H$ are elements of $C$ or are placed in the unbounded region of $\mathbb{R}^2 \setminus C$.

\begin{lemma}\label{lemma:planar-k}
    Let $G$ be a graph with a set $W \subseteq V(G)$ of size $k$ such that $G - W$ is planar. Let $C$ be a face of a planar embedding of $G - W$ and let $V(C)$ be the vertices on the boundary of $C$. Let $P$ be a prefix of a Hamiltonian path of $G$. Then there exists a partition of $V(C) \cap V(P)$ into at most $k+1$ intervals of $C$.
\end{lemma}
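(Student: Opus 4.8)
The plan is to reduce the statement to \cref{lemma:op-interval} by a deletion argument. Consider the plane embedding of $G - W$ and the face $C$. Take a prefix $P$ of a Hamiltonian path of $G$; as a path in $G$, $P$ may use vertices of $W$. Let $w_1, w_2, \dots, w_m$ be the vertices of $W$ that actually occur on $P$, listed in the order in which $P$ visits them (so $m \leq k$). Deleting these occurrences splits $P$ into at most $m+1 \leq k+1$ subpaths $P_0, P_1, \dots, P_m$, each of which is a path entirely inside $G - W$. Crucially, each $P_i$ is itself a prefix of a Hamiltonian path only in a weak sense, but the key point is that \cref{lemma:op-interval} is really a statement about \emph{any} path in a plane graph: the intersection of its vertex set with the boundary walk of a face forms an interval of that face. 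So I would first restate (or re-derive from) \cref{lemma:op-interval} in the form: for any path $Q$ in a plane graph and any face $C$, the set $V(Q) \cap V(C)$ is an interval of $C$.

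Granting that, each subpath $P_i$ is a path in the plane graph $G - W$, so $V(P_i) \cap V(C)$ is an interval $I_i$ of $C$. Since $V(P) \cap V(C) = \bigcup_{i=0}^{m} \bigl(V(P_i) \cap V(C)\bigr)$ — because every vertex of $V(P) \cap V(C)$ lies in $G - W$ and hence on one of the subpaths, while the $w_j$ contribute nothing to $V(C)$ — we get $V(C) \cap V(P) = I_0 \cup I_1 \cup \dots \cup I_m$, a union of at most $k+1$ intervals of $C$. To turn this union into a \emph{partition}, I would simply discard empty intervals and, if two of the intervals overlap or are adjacent, merge them; merging only decreases the count, so the bound $k+1$ is preserved. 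This yields the desired partition of $V(C) \cap V(P)$ into at most $k+1$ intervals of $C$.

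The main obstacle is making precise the claim that \cref{lemma:op-interval} applies to the subpaths $P_i$ as stated. As quoted, \cref{lemma:op-interval} speaks of "a prefix of a Hamiltonian path $P$" in a \emph{planar} graph $G$ and a face $C$ of a planar embedding of $G$; here I am applying it to the plane graph $G - W$ and to arbitrary subpaths of $P$, not prefixes of Hamiltonian paths of $G - W$. I expect this is harmless — inspecting the proof of \cref{lemma:op-interval} in \cite{beisegel2024computing}, the argument that $V(C) \cap V(\cdot)$ is an interval uses only that the object in question is a path (a connected subgraph of maximum degree two), together with the Jordan-curve structure of the face boundary; it never uses that the path extends to a Hamiltonian path. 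So the cleanest presentation is to invoke \cref{lemma:op-interval} in this slightly more general "any path" formulation (noting the remark after it that it holds for all plane graphs and all faces), apply it to each $P_i$ in $G - W$, and then assemble. A secondary, minor point to handle carefully is the bookkeeping when $P$ starts or ends at a vertex of $W$ (then one of the "boundary" subpaths $P_0$ or $P_m$ is empty) and when consecutive $w_j$'s are consecutive on $P$ (then some internal $P_i$ is empty); in all cases the number of nonempty subpaths is at most $m+1 \leq k+1$, so the bound is unaffected.
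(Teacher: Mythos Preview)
Your reduction has a genuine gap: the ``any path'' version of \cref{lemma:op-interval} that you rely on is false, and the Hamiltonian-prefix hypothesis there is doing real work. A concrete counterexample inside your own setup: let $G-W$ be the $4$-cycle $v_1v_2v_3v_4$ together with the chord $v_1v_3$ drawn inside, so the outer face $C$ is the $4$-cycle; let $W=\{w\}$ with $w$ adjacent to $v_2$ and $v_4$. Then $(v_1,v_3,v_4,w,v_2)$ is a Hamiltonian path of $G$, and its prefix $P=(v_1,v_3)$ contains no vertex of $W$, so your decomposition yields a single subpath $P_0=P$. But $V(P_0)\cap V(C)=\{v_1,v_3\}$ is \emph{not} an interval of $C$. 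Thus your argument would output ``one interval'' where two are needed (which is still $\le k+1=2$, so the lemma is fine --- only your proof is not).

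Conceptually, the point is that a subpath $P_i$ of $P$ inside $G-W$ can use a chord of $C$ to jump from one arc of $C$ to a non-adjacent arc. \cref{lemma:op-interval} rules this out for a prefix of a Hamiltonian path because the \emph{suffix} must still reach the vertices on both sides of the chord without crossing $P$ or $C$; for your $P_i$ there is no such suffix in $G-W$, since the Hamiltonian path of $G$ is allowed to exit through $W$ and come back on the other side. This is exactly why the paper's proof is more delicate: instead of cutting $P$ at $W$-vertices, it argues by induction on $|V(G)|$, isolates a minimal ``crest'' of $P$ over $C$ that contains no $W$-vertex, and then contracts either a $W$-vertex or merges intervals to drop both $k$ and the interval count by one. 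If you want to salvage a direct argument, you would need to show not that each $P_i$ contributes one interval, but that globally the number of ``jumps'' across $C$ is bounded by $k$; that is essentially what the paper's crest argument establishes.
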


\begin{figure}
    \centering
    \includegraphics{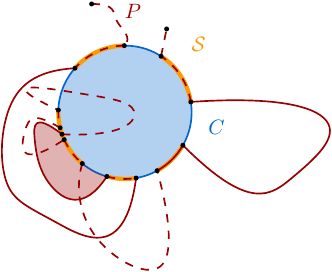}
    \caption{A face \(C\) (blue) and a prefix \(P\) of a Hamiltonian path (red). The solid parts of \(P\) are crests. The dashed parts might contain vertices of \(W\) and are thus not necessarily planar.}
    \label{fig:planar-k-definitions}
\end{figure}

\begin{proof}
    We use induction over the number of vertices of $G$. For one vertex the statement is trivial. So assume that the statement holds for all graphs with less than $n$ vertices. Let $G$ be a graph with $n$ vertices and a set $W \subseteq V(G)$ of size $k$ such that $G - W$ is planar. Let $C$ be a face of a planar embedding of $G - W$. Note that $C$ could be a closed walk with repeating vertices, for example, the outer face of a graph that is not 2-connected. In the following we will assume that this is not the case. This greatly simplifies the notation. By tweaking the definitions, the other case can be proven analogously, see the end of the proof.
    
    Let $P$ be a prefix of a Hamiltonian path $\cP$ of $G$. Let $\mathcal{S}$ be the smallest partition of $V(C) \cap V(P)$ into intervals of $C$, i.e., the intervals are inclusion-maximal. 
    See \cref{fig:planar-k-definitions} for an illustration. Let $|\mathcal{S}| = \ell$. If we remove the elements of $C$ from $P$, we get pairwise disjoint subpaths of $P$. We ignore the subpath that contains all vertices before the first vertex of $C$ on $P$ if it exists and the subpath containing all vertices after the last vertex of $C$ on $P$ if it exists. We also ignore all subpaths where the neighbor of the first and the last vertex lies on the same interval of $\mathcal{S}$. The endpoints of the remaining subpaths have neighbors on $P$ that lie on $C$. There are at least $\ell - 1$ of these subpaths. If each of these subpaths contains a vertex of $W$, then $\ell \leq k+1$ and we are done. Otherwise, consider a subpath $P_\text{sub}$ that does not contain vertices of $W$. We define the \emph{crest} induced by $P_\text{sub}$ as the subpath of $P$ containing $P_\text{sub}$ and the $P$-neighbors of its endpoints.
    Let $Q$ be such a crest. Let $v_i$ be the first vertex and $v_j$ be the last vertex of $Q$ with regard to the ordering of $P$. By definition, both $v_i$ and $v_j$ lie on $C$.
    We define $C[v_i,v_j]$ as the part of $C$ between $v_i$ and $v_j$ in the cyclic ordering and $C[v_j,v_i]$ the part of $C$ between $v_j$ and $v_i$ in the cyclic ordering. Further note that $v_i$ and $v_j$ lie in different intervals of $\mathcal{S}$. Therefore, there are vertices not in $P$, both in $C[v_i,v_j]$ and in $C[v_j,v_i]$. Let $t$ be the end vertex of $P$ or, in case this end vertex is $v_j$, let $t$ be the successor of $v_j$ in the Hamiltonian path $\cP$. Note that $t$ exists as $P$ does not contain all vertices of~$\cP$. W.l.o.g.~we may assume that $t$ lies outside of $\widetilde{Q} = Q \cup C[v_i,v_j]$. If this is not the case, then we define one face inside of $\widetilde{Q}$ as outer face without changing the faces of the embedding. Then, we can consider $Q \cup C[v_j,v_i]$ as $\widetilde{Q}$.
    
    Let $R$ be a \emph{minimal} crest inside of $\widetilde{Q}$, i.e., there is no crest that lies inside of $\widetilde{R} := R \cup C[x,y]$, where $x$ and $y$ are the endpoints of $R$ and $C[x,y]$ is the part of $C$ between $x$ and $y$ that is inside of $\widetilde{Q}$.
    Note that $C[x,y]$ contains vertices that are not part of $P$ as the crest $R$ connects two intervals, say $S_x$ and $S_y$, respectively.
    We will now show that we can always find either a vertex to delete, or an edge to contract in order to apply the induction hypothesis. 

Assume that there is an edge $ab \in E(P)$ such that w.l.o.g.~$b$ is strictly inside of $\widetilde{R}$. We contract the edge $ab$ in $G$ and $P$. The resulting graph $G'$ and resulting path $P'$ fulfill the induction hypothesis, i.e., $W \subseteq V(G')$, $P'$ is a prefix of a Hamiltonian path of $G'$ and $C$ is a face of an embedding of $G'$. Therefore, we can partition $V(C) \cap V(P')$ into at most $k+1$ intervals of $C$. These intervals map bijectively to intervals of $V(C) \cap V(P)$. Hence, we can assume that $P$ does not contain any vertices strictly inside of $\widetilde{R}$. 

   Suppose there is an interval $S$ that is contained completely in $C[x,y] - \{x,y\}$. As $t$ lies outside of $\widetilde{R}$ and $\widetilde{R}$ forms a cycle separating $S$ from $t$, the interval $S$ is adjacent to at least one vertex in $W$.
    If $S$ has exactly one $P$-neighbor $w \in W$ -- i.e. $P$ starts with $S$ -- then we delete $w$ and the whole interval \(S\) and connect the neighbors of $S$ on $C$ with an edge.
    This gives a graph \(G'\) with one less vertex and one less interval. The suffix of \(P\) that starts at the successor of \(P\) is the prefix of a Hamiltonian path in \(G'\). The statement follows by induction.
    
 Suppose $S$ has more than one $P$-neighbor in $W$. We connect these neighbors to form a clique. Furthermore, we sort them by their appearance in $P$ and contract the last and the penultimate vertex of these neighbors. As $t$ lies outside of $\widetilde{R}$, there are only vertices of $S$ visited in $P$ between these two neighbors. Furthermore, we delete \(S\) and connect its neighbors on $C$ with an edge. We can transform $P$ into $P'$ by replacing a sequence of $S$-vertices between two $P$-neighbors of $S$ by one of the clique edges. The statement follows by induction as there is one vertex less in $W$ and one interval less, see \cref{fig:planar-k-contract}.

 \begin{figure}
     \centering
     \includegraphics[page=3, scale=0.8]{dist_to_outerplanar.pdf}
     \caption{Illustration of \cref{lemma:planar-k}}.
     \label{fig:planar-k-contract}
 \end{figure}

    If there is no interval that is contained completely in $C[x,y] - \{x,y\}$, then all vertices on $C[x,y]$ are either in $S_x$, $S_y$ or not in $P$. In this case, we can apply similar arguments to contract the vertices that are not in $P$ and, thus, join $S_x$ and $S_y$ to reduce the number of intervals. Let $U$ be the set of vertices inside of $\widetilde{R}$ that are not in $P$. Note that the vertices in $U \cap C[x,y]$ are consecutive on $C$. As $t$ lies outside of $\widetilde{R}$, the vertices in $U$ have at least one neighbor in the complete Hamiltonian path $\cP$ that is contained in $W$. First suppose that the Hamiltonian path $\cP$ ends in some vertex of $U$. Let $w$ be the last vertex of $W$ on $\cP$. Note that all successors of $w$ in $\cP$ are in $U$. We delete $w$ and $U$ and connect the intervals $S_x$ and $S_y$ with an edge. Furthermore, we turn all remaining $\cP$-neighbors of $U$ in $W$ into a clique. The statement follows by induction as there is one vertex less in $W$ and one interval less.

    Suppose $\cP$ does not end in $U$. Then $U$ has at least two $\cP$-neighbors in $W$. We make these neighbors to a clique. Furthermore, we sort them by their appearance in $\cP$ and contract the last and the penultimate vertex of these neighbors. As $t$ lies outside of $\widetilde{R}$ and $\cP$ ends outside of $\widetilde{R}$, there are only vertices of $U$ visited in $\cP$ between these two neighbors. Furthermore, we delete $U$ and connect the intervals $S_x$ and $S_y$ with an edge. As above, the statement follows by induction as there is one vertex less in $W$ and one interval less.

    In the case that $C$ contains repeating vertices, the proof works similarly. However, we have to adapt the definition of what is ``inside'' of $\widetilde{Q}$ and $\widetilde{R}$. As $C$ contains repeating vertices, it might happen that $C[v_i,v_j]$ or $C[x,y]$ enclose vertices that are not inside of $\widetilde{Q}$ or $\widetilde{R}$. Such vertices can be handled in a similar way to those that are strictly inside of $\widetilde{Q}$ or $\widetilde{R}$.
\end{proof}

\begin{theorem}
    Given an $n$-vertex graph $G$ with \param{distance to outerplanar}~$k$, we can solve  MinPOHPP and MinPOHCP on $G$ in $n^{\O(k)}$ time.
\end{theorem}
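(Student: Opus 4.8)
The plan is to lift the dynamic programming scheme behind \cref{thm:inner-outerplanar} (and the outerplanar algorithm of Beisegel et al.~\cite{beisegel2024computing}) to this setting, using \cref{lemma:planar-k} as the structural backbone. Fix a set $W \subseteq V(G)$ with $|W| = k$ such that $G - W$ is outerplanar, fix an outerplanar embedding of $G - W$, and let $C$ be its outer face. Then every vertex of $G - W$ lies on $C$, so for any prefix $\cP'$ of a Hamiltonian path of $G$ we have $V(C) \cap V(\cP') = V(\cP') \setminus W$, and by \cref{lemma:planar-k} this set decomposes into at most $k+1$ intervals of $C$. As in the earlier results it suffices to treat the cost version MinPOHPP, the unweighted problem being the special case of unit costs; and since \cref{lemma:planar-k} is stated for faces that are closed walks with repeated vertices as well, there is no need to first reduce to $2$-connected graphs (though this could be done, as \param{distance to outerplanar} is monotone under taking induced subgraphs).

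The DP states are tuples $(X, \mathcal{I}, t)$ where $X \subseteq W$, $\mathcal{I}$ is a set of at most $k+1$ pairwise disjoint intervals of $C$, and $t \in X \cup V(C)$ is a prospective endpoint; we only retain states in which $t$ lies in $X$ or is an endpoint of some interval of $\mathcal{I}$. Each interval is described by its two endpoints on $C$, so there are $n^{\O(k)}$ choices for $\mathcal{I}$, at most $2^k$ for $X$, and $\O(k)$ relevant choices for $t$, hence $n^{\O(k)}$ states in total. For each state we compute $M(X,\mathcal{I},t)$, the minimum cost of an ordered path of $G$ with vertex set $X \cup \bigcup \mathcal{I}$ that is a prefix of a linear extension of $\pi$ and ends in $t$ (or $\infty$ if none exists), by induction on $|X| + |\bigcup \mathcal{I}|$. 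The base case is single-vertex paths: $M$ is $0$ if the vertex is a minimal element of $\pi$ and $\infty$ otherwise. For the inductive step we ``undo'' the last vertex $t$: removing $t$ from such a path $\cP$ yields a shorter prefix $\cP'$ whose state $(X',\mathcal{I}')$ arises from $(X,\mathcal{I})$ in one of $\O(k)$ ways — either $t \in X$ with $X' = X \setminus \{t\}$ and $\mathcal{I}' = \mathcal{I}$, or $t$ is a singleton interval of $\mathcal{I}$ which disappears, or $t$ is an endpoint of an interval of $\mathcal{I}$ which shrinks, or $t$ is interior to an interval of $\mathcal{I}$ which splits into two (this last case only when $|\mathcal{I}| \le k$, so $|\mathcal{I}'| \le k+1$ is preserved). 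The predecessor $t'$ of $t$ on $\cP$ must be a neighbour of $t$ that is a valid endpoint of $(X',\mathcal{I}')$, i.e.\ $t' \in X'$ or $t'$ is an interval-endpoint of $\mathcal{I}'$, which leaves $\O(k)$ candidates. We set
\[
M(X,\mathcal{I},t) \;=\; \min_{t'} \bigl( M(X',\mathcal{I}',t') + c(t't) \bigr),
\]
after first checking in $\O(n)$ time that $t$ is a minimal element of $\pi$ restricted to $V(G) \setminus V(\cP')$ (if not, $\cP'$ cannot legally be extended by $t$ and the value is $\infty$). The answer to MinPOHPP is the minimum of $M(W,\mathcal{I},t)$ over all states with $\bigcup \mathcal{I} = V(C)$ and all endpoints $t$.

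Correctness follows along the lines of the proof of Theorem~5.3 in~\cite{beisegel2024computing}: \cref{lemma:op-interval} and \cref{lemma:planar-k} ensure that every prefix of a Hamiltonian path of $G$ is represented by one of our states, so the recursion is exhaustive, and the $\pi$-minimality test guarantees that only $\pi$-extending paths contribute. For the running time, there are $n^{\O(k)}$ states, each handled in $\O(k) \cdot \O(n) \subseteq \O(n^2)$ time, giving $n^{\O(k)}$ overall; the bound for MinPOHCP then follows from \cref{thm:cycle2path}. I expect the main obstacle to be the careful bookkeeping of how the interval decomposition of a prefix evolves as a single vertex is appended or removed — intervals being created, destroyed, shrunk, extended, merged, or split — together with verifying that the $k+1$ bound of \cref{lemma:planar-k} is maintained along every chain of transitions, which is exactly what keeps the state space of size $n^{\O(k)}$ and the candidate set for $t'$ of size $\O(k)$.
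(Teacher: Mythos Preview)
Your proposal follows the same dynamic-programming scheme as the paper's proof, so the overall strategy is sound. There is, however, a genuine gap in the restriction that the endpoint $t$ of a prefix must lie in $X$ or at an endpoint of some interval of~$\mathcal{I}$. This is false in the multi-interval setting: when a vertex $t \in V(C)$ is appended to a prefix in which both of $t$'s $C$-neighbours are already present, $t$ merges two intervals of the predecessor state $\mathcal{I}'$ into a single interval of $\mathcal{I}$ and ends up in its \emph{interior}. You even list this ``split'' transition, but you then contradict it by only scanning predecessor states $(X',\mathcal{I}',t')$ with $t'$ at an interval endpoint of $\mathcal{I}'$. Concretely, take $G - W$ to be the $4$-cycle $(a,b,c,d)$ with the chord $ac$, and $W = \{w\}$ with $w$ adjacent only to $b$ and $d$; the prefix $(a,c,b)$ has interval set $\{[a,c]\}$ with last vertex $b$ interior, and the state for the extended prefix $(a,c,b,w)$ can only be reached via $t' = b$, which your recursion skips --- so if $\pi$ forces this order, the $\pi$-extending Hamiltonian path $(a,c,b,w,d)$ is missed entirely.

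The fix is trivial and is exactly what the paper does: drop the endpoint restriction and let $t$ range over all $n$ vertices. This costs a factor of $n$ rather than $\O(k)$ both in the state count and in the predecessor scan, but the total remains $n^{\O(k)}$. (Your appeal to \cref{lemma:op-interval} for the endpoint property does not transfer here: that lemma guarantees a \emph{single} interval, which forces the last $C$-vertex to be at one of its two ends; with up to $k+1$ intervals no such conclusion is available.) A minor further omission is that you do not say how $W$ is obtained; the paper simply enumerates all $\binom{n}{k}$ candidate sets and tests outerplanarity, which fits within the $n^{\O(k)}$ budget.
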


\begin{proof}
    We can compute a set $W$ of size $k$ such that $G - W$ is outerplanar in $n^{\O(k)}$ time by brute-force as outerplanar graphs can be recognized in polynomial time~\cite{mitchell1979linear}. We adapt the algorithm given in \cref{thm:inner-outerplanar}. Let $C$ be the outer face of $G - W$. Note that $C$ could be a closed walk with repeating vertices. 

    We consider tuples $(\S,X,t)$, which again represent prefixes of Hamiltonian paths. Here, $\S$ is a set of segments of $C$, $X$ is a subset of $W$ and $t$ is the end vertex of the prefix. Due to \cref{lemma:planar-k}, we only need to consider sets of at most $k+1$ segments. We compute for every tuple $(\S,X,t)$ the minimal cost $M(\S,X,t)$ of an ordered path $\cP$ of $G$ fulfilling the following properties (or $\infty$ if no such path exists):
\begin{enumerate}[(i)]
  \item $\cP$ consists of the vertices in $\bigcup_{S \in \S} S$ and in the set $X$,\label{cond:outer21}
  \item $\cP$ is a prefix of a linear extension of $\pi$,\label{cond:outer22}
  \item $\cP$ ends in $t$.\label{cond:outer23}
\end{enumerate} 

This is done inductively by the size of $\cP$. Let $\S'$ and $X'$ be the updated values if $t$ is removed from the potential prefix. We first have to check whether $t$ is minimal in $\pi$ if all the vertices of $\S'$ and $X'$ have been visited. This costs $\O(n)$ time. If this is not the case, we set the $M$-value to $\infty$. Otherwise, we check the $M$-values for all possible vertices $t'$ that are before $t$ in the prefix. For each choice of $t'$ we compute the value $M(\S,X',t') + c(tt')$ and set the $M$-value of $(\S,X,t)$ to the minimum of these values.

To bound the running time, we first bound the number of tuples. There are at most $2^k$ possibilities for $X$, $n$ possibilities for $t$ and the number of sets $\S$ can be bounded by $n^{2(k+1)}$. Summing up, for each of the $\O(2^kn^{2k+3})$ tuples we need $n^{\O(1)}$ time which leads to the overall running time of $n^{\O(k)}$. The proof of the correctness of the algorithm follows along the lines of the proof of Theorem~5.3 in \cite{beisegel2024computing}.
\end{proof}

\section{Non-Sparse Width Parameters}\label{sec:d2c}

All parameters considered so far are sparse, i.e. unbounded for cliques. As POHPP and POHCP are trivial on cliques, it makes sense to consider also non-sparse parameters. Both MinPOHPP and MinPOHCP are \NP-hard on cliques as they form generalizations of (Path) TSP. Thus, we only consider the unweighted problems in this section.

\subsection{Hardness}

As mentioned in the introduction, \textsc{Hamiltonian Path} and \textsc{Hamiltonian Cycle} can be solved in \FPT{} time when parameterized by the \param{independence number}, that is the size of the largest independent set~\cite{fomin2025path}. For traceable graphs, this parameter is a lower bound on the \param{edge clique cover number}. Thus, both problems can also be solved in \FPT{} time for this parameter. Here, we show that these results cannot be extended to POHPP or POHCP unless $\P = \NP$.

\begin{theorem}\label{thm:ecc}
     POHPP and POHCP are \NP-complete on graphs of \param{edge clique cover number}~3.
\end{theorem}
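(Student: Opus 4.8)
The plan is to reduce from the \textsc{Alternating Linear Extension Problem} with the partial order oriented from $A$ to $B$, which is \NP-complete by Beisegel et al.~\cite{beisegel2024computing}. Given such an instance $(A,B,\pi)$ with $|A|=|B|=m$, I would build a graph $G'$ whose vertex set is essentially $A \cup B$ together with a constant number of auxiliary vertices, and arrange things so that the edge set of $G'$ can be covered by three cliques. The natural candidate: make $A$ into a clique $K_A$, make $B$ into a clique $K_B$, and let the third clique be the bipartite-like ``crossing'' structure — but a clique cannot be bipartite, so instead I would use the third clique to carry all edges between $A$ and $B$ by making $A \cup B$ itself a clique on that third copy. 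Concretely, take $G'$ to have $A$, $B$ as cliques, and a third clique on $A\cup B$ (so $G'[A\cup B]$ is in fact complete); the point of the three-clique cover is not to restrict $G'[A\cup B]$ but to accommodate a few extra gadget vertices. The three cliques would be $C_1 = A \cup \{\text{start gadget}\}$, $C_2 = B \cup \{\text{end gadget}\}$, and $C_3 = A \cup B$. Since $A\cup B$ already forms a clique, a Hamiltonian path in $G'$ restricted to $A\cup B$ can visit these $2m$ vertices in any order; the precedence constraints $\pi$ must be respected.

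The key difficulty is forcing the Hamiltonian path to \emph{alternate} between $A$ and $B$, since without that the reduction proves nothing. The standard trick here is to use a partial order, not the graph structure: I would add $m-1$ "separator" or "index" vertices — or better, exploit a counting/blocking mechanism. One clean approach: add two sequences of "clock" vertices and use precedence constraints so that after position $i$ of the path, exactly $\lceil i/2\rceil$ vertices of $A$ and $\lfloor i/2\rfloor$ of $B$ have been used. But encoding "exactly $j$ of $A$ used" via a partial order of bounded width seems hard; a better idea is to chain auxiliary vertices $d_1 \prec d_2 \prec \cdots$ interleaved so that $a$-vertices must precede odd $d$'s and $b$-vertices the even ones — though these $d$'s would add to the graph and must fit in the three cliques. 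I expect the honest route is: put the alternation burden entirely on $\pi$ by adding, for each $i \in [m]$, elements that must sit at position $2i-1$ and $2i$; with the orientation-from-$A$-to-$B$ property one can hope the linear extension is alternating iff the original instance has an alternating extension. So the main obstacle is designing the precedence gadget that (a) enforces alternation, (b) keeps the edge clique cover number at exactly $3$ (the added vertices must each live inside one of $C_1, C_2, C_3$ without creating a fourth clique-class of edges), and (c) has a Hamiltonian path at all whenever the constraints are satisfiable.

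After fixing the construction, the proof splits into the two standard directions: (i) from an alternating linear extension $\sigma$ of $\pi$ build a $\pi$-extending Hamiltonian path of $G'$ by walking the vertices in the order $\sigma$ (plus placing the gadget vertices at their forced slots), using that $G'[A\cup B]$ is complete so every consecutive pair is an edge; and (ii) from a $\pi$-extending Hamiltonian path of $G'$, argue via the forced positions of the gadget vertices that the induced order on $A\cup B$ is alternating and extends $\pi$, hence solves the original instance. Membership in \NP{} is immediate (guess the path, check the $\O(n^2)$ precedence pairs and the edges), and the reduction is clearly polynomial and keeps the \param{edge clique cover number} at $3$. Finally, \cref{thm:path2cycle} — noting that the \param{edge clique cover number} is closed under adding a universal vertex, since one extra vertex adds at most one extra clique, so the value rises to at most $4$; to get exactly $3$ for POHCP one may instead close the path into a cycle directly in the construction by identifying or linking the start and end gadgets — transfers the hardness to POHCP.
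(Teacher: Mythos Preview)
Your starting point is right: the paper also reduces from the \textsc{Alternating Linear Extension Problem} with $\pi$ oriented from $A$ to $B$. But your proposal has a genuine gap precisely where you yourself flag ``the main obstacle'': you never actually build the alternation-forcing gadget, and the architecture you sketch cannot support one.

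Concretely, if $A\cup B$ is a clique and you add only $O(1)$ auxiliary vertices, then no partial order $\pi'$ extending $\pi$ can force alternation. On a clique the graph imposes nothing, so the question becomes purely order-theoretic: does $\pi'$ have only alternating linear extensions on $A\cup B$? With $\pi$ oriented $A\to B$ and only a constant number of extra elements with fixed positions, you cannot prevent, say, swapping two consecutive $A$-vertices with two consecutive $B$-vertices deep in the middle of the path. Your ``clock vertex'' idea is in fact the right instinct, but it requires $\Theta(m)$ clock vertices, not $O(1)$, and---crucially---it requires \emph{non-adjacencies} so that the path is structurally forced to pass through $A$ and through $B$ between consecutive clock ticks.

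This is exactly what the paper does. It introduces three extra sets $X,Y,Z$ each of size $|A|$, and takes the three cliques $C_1=A\cup X\cup Z$, $C_2=A\cup B\cup Z$, $C_3=B\cup Y\cup Z$. Thus $X$ is non-adjacent to $B\cup Y$ and $Y$ is non-adjacent to $A\cup X$. The partial order contains the long chain $x_1\prec y_1\prec z_1\prec x_2\prec y_2\prec\dots\prec z_n$ together with the original constraints of $\pi$. In any $\pi'$-extending Hamiltonian path, $x_i$ precedes $y_i$ and every $z$-vertex lies either before $x_i$ or after $y_i$; since $x_i$ has no neighbour in $B\cup Y$ and $y_i$ has no neighbour in $A\cup X$, the path must use at least one $A$-vertex and at least one $B$-vertex between $x_i$ and $y_i$. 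A counting argument then gives exactly one of each, yielding the alternation. The constructed path is already a cycle (its endpoints $x_1$ and $z_n$ both lie in $C_1$), so POHCP follows without invoking \cref{thm:path2cycle} and without raising the edge clique cover number above~$3$.
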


\begin{proof}
    We reduce from the \NP-hard \textsc{Alternating Linear Extension Problem} for partial orders that are oriented from $A$ to $B$ to POHPP and POHCP. Let $(A,B,\pi)$ be an instance of this problem with $A = \{a_1, \dots, a_n\}$ and $B = \{b_1, \dots, b_n\}$ where $\pi$ is oriented from $A$ to $B$. We build a graph $G$ as follows (see \cref{fig:ecc} for an illustration). The vertex set of $G$ consists of the following vertices:
    \begin{itemize}
        \item $A = \{a_1, \dots, a_n\}$
        \item $B = \{a_1, \dots, a_n\}$
        \item $X = \{x_1, \dots, x_n\}$
        \item $Y = \{y_1, \dots, y_n\}$
        \item $Z = \{z_1, \dots, z_n\}$
    \end{itemize}

    The graph $G$ contains all edges except for those between vertices of $X$ and $B$, between vertices of $X$ and $Y$ as well as those between vertices of $Y$ and $A$. In other words, the edges of $G$ can be covered by the three cliques $C_1 = A \cup X \cup Z$, $C_2 = A \cup B \cup Z$, and $C_3 = B \cup Y \cup Z$.

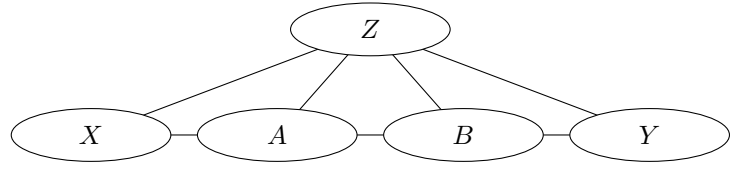
\begin{figure}
    \centering
    \begin{tikzpicture}[scale=0.7]
    \draw (0,0) -- (3.5,0);
    \draw (3.5,0) -- (7,0);
    \draw (7,0) -- (10.5,0);
    \draw (0,0) -- (5.25,2);
    \draw (3.5,0) -- (5.25,2);
    \draw (7,0) -- (5.25,2);
    \draw (10.5,0) -- (5.25,2);

    \draw[fill=white] (0,0) ellipse (1.5cm and 0.5cm);
    \draw[fill=white] (3.5,0) ellipse (1.5cm and 0.5cm);
    \draw[fill=white] (7,0) ellipse (1.5cm and 0.5cm);
    \draw[fill=white] (10.5,0) ellipse (1.5cm and 0.5cm);
    \draw[fill=white] (5.25,2) ellipse (1.5cm and 0.5cm);

    \node at (0,0) {$X$};
    \node at (3.5,0) {$A$};
    \node at (7,0) {$B$};
    \node at (10.5,0) {$Y$};
    \node at (5.25,2) {$Z$};
\end{tikzpicture}
    \caption{Construction of \cref{thm:ecc}. The ellipses form cliques. The connection of an ellipse with another implies that all edges between the vertices of the two cliques are present.}
    \label{fig:ecc}
\end{figure}
    
    The partial order $\pi'$ is the reflexive and transitive closure of the following constraints:

    \begin{enumerate}[(P1)]
        \item $x_1 \prec v$ for all $v \in V(G) \setminus \{x_1\}$,\label{ecc-p1}
        \item $x_1 \prec y_1 \prec z_1 \prec x_2 \prec y_2 \prec z_2 \dots \prec x_n \prec y_n \prec z_n$,\label{ecc-p2}
        \item $a_i \prec b_j$ if $(a_i,b_j) \in \pi$,\label{ecc-p3}
    \end{enumerate}

    \begin{claim}
        If $(A,B,\pi)$ has an alternating linear extension $\sigma$, then $G$ has a $\pip$-extending Hamiltonian path.
    \end{claim}

    \begin{claimproof}
        W.l.o.g.~we may assume that $\sigma = (a_1, b_1, a_2, b_2, \dots, a_n, b_n)$. Let $\cP$ be the following path (and cycle) in $G$:
        \[
        \cP := (x_1, a_1, b_1, y_1, z_1, x_2, a_2, b_2, y_2, z_2, \dots, x_n, a_n, b_n, y_n, z_n).
        \]
        We claim that $\cP$ is $\pip$-extending. The constraints \pef{ecc-p1} and \pef{ecc-p2} are obviously fulfilled. If there is a constraint $a_i \prec b_j$ in $\pi'$, then there is a constraint $a_i \prec b_j$ in $\pi$. Thus, $a_i$ is to the left of $b_j$ in $\sigma$ and, therefore, $a_i$ is to the left of $b_j$ in $\cP$. Hence, the constraints \pef{ecc-p3} are also fulfilled.
    \end{claimproof}

    \begin{claim}
        If $G$ has a $\pip$-extending Hamiltonian path, then $(A,B,\pi)$ has an alternating linear extension.
    \end{claim}
    Let $\cP$ be an $\pip$-extending Hamiltonian path in $G$. Due to constraint \pef{ecc-p2}, it holds that $x_i$ is to the left of $y_i$ in $\cP$. Furthermore, all vertices of $Z$ are either to the left of $x_i$ or to the right of $y_i$. This implies that between $x_i$ and $y_i$ there is at least one vertex of $A$ and one vertex of $B$ in $\cP$. As there are exactly $n$ vertices in $A$ and $n$ vertices in $B$, there must be exactly one vertex of $A$ and one vertex of $B$ between $x_i$ and $y_i$. W.l.o.g.~we may assume that for each $i \in [n]$, the vertices $a_i$ and $b_i$ are between $x_i$ and $y_i$. We claim that then $(a_1, b_1, \dots, a_n, b_n)$ forms an alternating linear extension of $\pi$. Assume for contradiction that there is a constraint $a_i \prec b_j$ with $j < i$. Then there is a constraint $a_i \prec b_j$ in $\pip$. However, this contradicts our assumption on $\cP$ as $b_j$ is to the left of $a_i$ in $\cP$.
\end{proof}

We have to leave open the case of \param{edge clique cover number}~2. However, we can observe that the graph $G$ constructed in the proof of \cref{thm:ecc} has \param{clique cover number}~2 since the cliques $C_1$ and $C_3$ cover all vertices. This implies the following result.

\begin{corollary}\label{thm:cc}
     POHPP and POHCP are \NP-complete on graphs of \param{clique cover number}~2.
\end{corollary}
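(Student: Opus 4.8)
The plan is to reuse verbatim the instance $(G,\pi')$ built in the proof of \cref{thm:ecc} from an instance $(A,B,\pi)$ of the \textsc{Alternating Linear Extension Problem} with $\pi$ oriented from $A$ to $B$. That proof already shows that $(G,\pi')$ is a yes-instance of POHPP (and likewise of POHCP, since the path $\cP$ constructed there is also a cycle and the converse argument only uses the ordering constraints) if and only if $(A,B,\pi)$ admits an alternating linear extension, and that the reduction is computable in polynomial time. Hence the only thing left to verify is that the \param{clique cover number} of $G$ is at most $2$.

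To this end, recall that $V(G) = A \cup B \cup X \cup Y \cup Z$ and that the only non-edges of $G$ run between $X$ and $B$, between $X$ and $Y$, and between $Y$ and $A$. In particular $A \cup X \cup Z$ is a clique (this is exactly the clique $C_1$ from the proof of \cref{thm:ecc}), and $B \cup Y$ is a clique as well, because all pairs inside $B$, inside $Y$, and between $B$ and $Y$ are edges of $G$. Since these two sets partition $V(G)$, they form a clique cover of $G$ of size $2$, so the \param{clique cover number} of $G$ is at most $2$ (and it is exactly $2$ whenever $A\cup B$ is nonempty, because then $G$ has a non-edge and is not complete; but the upper bound alone suffices for the hardness claim).

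Combining this bound with the correctness and polynomial running time of the reduction from \cref{thm:ecc} gives \NP-hardness of POHPP and POHCP on graphs of \param{clique cover number}~$2$. Membership in \NP{} is immediate: given an ordering of $V(G)$ one checks in polynomial time that it is a Hamiltonian path (resp.\ that, additionally, its two endpoints are adjacent) and that it is a linear extension of $\pi'$. I expect no real obstacle here; the one point that needs a moment's care is that a clique cover must be a \emph{partition}, which is why we take the disjoint cliques $A \cup X \cup Z$ and $B \cup Y$ rather than the overlapping pair $C_1, C_3$ referred to in the discussion preceding the statement.
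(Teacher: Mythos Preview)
Your proposal is correct and follows essentially the same approach as the paper: the paper simply remarks that the graph $G$ from the proof of \cref{thm:ecc} has \param{clique cover number}~$2$ because $C_1$ and $C_3$ cover all vertices. Your version is in fact slightly more careful, since you take the disjoint pair $A\cup X\cup Z$ and $B\cup Y$ to honor the partition requirement, whereas $C_1$ and $C_3$ overlap in $Z$; this is a harmless refinement (removing $Z$ from $C_3$ still leaves a clique), and otherwise the arguments coincide.
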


Note that for graphs of \param{clique cover number}~1 (i.e., cliques), POHPP and POHCP are trivial as every linear extension of $\pi$ is a Hamiltonian path.

It follows from \cref{thm:w1-d2p} that POHPP and POHCP are \W-hard when parameterized by \param{distance to block}. Here, we extend this result to \param{distance to clique}. 

\begin{theorem}\label{thm:w1-d2c}
     POHPP and POHCP are \W-hard when parameterized by either \param{distance to clique} or \param{distance to cluster modules} also known as \param{twin-cover number}.
\end{theorem}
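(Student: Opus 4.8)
The plan is to reduce from the \textsc{Multicolored Clique Problem} (MCP), mirroring the structure of the proof of \cref{thm:w1-d2p} but replacing the long path $\Psi$ by a single clique, so that only $\O(k^2)$ ``hat'' vertices remain outside the clique. As in \cref{thm:w1-d2p}, let $G$ be an MCP instance with color classes $V_1,\dots,V_k$, each of size $q$, vertices $v^i_p$. We build $G'$ on the vertex set $V(G') = R \cup Y$, where $R$ is a large independent-in-the-construction set that we will declare to be a clique, and $Y$ is a small set of gadget vertices of size $\O(k^2)$ that is deleted to obtain the clique. Concretely, $R$ contains a selection block $X^i = \{x^i_1,\dots,x^i_q\}$ for every color $i$, and a verification block $W^{i,j} = \{w^{i,j}_{p,r} : v^i_p v^j_r \in E(G)\}$ for every pair $i<j$. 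All of $R$ is made into a clique. The set $Y$ consists of ``start/connector'' vertices that force the path to alternate correctly: roughly, vertices $s^i$ (and their hatted copies) attached only to $X^i$ (and perhaps $X^{i-1}$), vertices $c^{i,j}$ attached only to $W^{i,j}$ and to the block preceding it, a source $z$, and the analogues $\hat t^i, \hat d^{i,j}$ used to ``jump over'' already-visited vertices of a block. Since $R$ is a clique, $G' - Y$ is a clique and $|Y| = \O(k^2)$, so the \param{distance to clique} of $G'$ is $\O(k^2)$; moreover every singleton component of $G'-Y$ is trivially a module in $G'$ with respect to $Y$ up to the clique edges, which (after a small adjustment making each block a module by attaching every $Y$-vertex that touches a block to the whole block) also bounds the \param{distance to cluster modules} aka \param{twin-cover number} by $\O(k^2)$.

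The partial order $\pi$ is again the reflexive-transitive closure of: (P1) $s^1 \prec v$ for all other $v$; (P2) $z \prec y$ for all $y \in Y \setminus \{z, s^1, \dots\}$ (i.e. all hatted/jump vertices); (P3) $x^i_p \prec w^{i,j}_{p,r}$ whenever $v^i_p v^j_r \in E(G)$; (P4) $x^j_r \prec w^{i,j}_{p,r}$ whenever $v^i_p v^j_r \in E(G)$. The forward direction (\cref{claim:w1-d2p1} analogue) is the same walk as before: start at $s^1$, bounce $s^1, x^1_{p_1}, s^2, x^2_{p_2}, \dots, x^k_{p_k}$, then $c^{1,2}, w^{1,2}_{p_1,p_2}, c^{1,3}, w^{1,3}_{p_1,p_3}, \dots, w^{k-1,k}_{p_{k-1},p_k}$, then $z$, and then sweep through the leftover vertices of each block using the corresponding $\hat t^i$ / $\hat d^{i,j}$ to skip the one already-used vertex — only now the ``sweep'' is trivial because $R$ is a clique, so we can visit the leftover vertices of a block in any order and pass between blocks directly; the (P2) constraints are satisfied because every hatted vertex is visited after $z$. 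The backward direction (\cref{claim:w1-d2p2} analogue) is where the clique structure needs care: the path must still start at $s^1$, hence at some $x^1_{p_1}$; since the only neighbors of $s^1$ other than $X^1$ would let the path continue inside $R$, I must arrange $Y$ so that every vertex of $R$ reachable from $x^1_{p_1}$ forces, via (P2), that the next vertex is $s^2$ (its unvisited non-hat neighbors are all $\succ z$); this is exactly the trick from \cref{thm:w1-d2p} and it carries over. Once the prefix $s^1, x^1_{p_1}, s^2, \dots, x^k_{p_k}, c^{1,2}$ is forced, the constraints (P3)–(P4) force the vertex after $c^{1,2}$ to be $w^{1,2}_{p_1,p_2}$, certifying the edge $v^1_{p_1}v^2_{p_2}$, and so on for every pair, yielding a multicolored clique in $G$.

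The main obstacle is the backward direction: in \cref{thm:w1-d2p} the ``no shortcut'' behavior was guaranteed by $\Psi$ being a path, but here $R$ is a clique, so a priori the Hamiltonian path could wander freely inside $R$ and shortcut the intended order of gadgets. The fix is to make the $Y$-vertices the only ``legal exits'': each block sits in the clique, but (P2) makes every non-hat vertex of $R$ unavailable immediately after visiting any $x^i_{p}$ until $z$ has been reached, so the path is forced through the $s$-chain and then the $c$-chain before $z$, and only afterwards may it roam the clique to finish. I would spend the bulk of the write-up verifying precisely this forcing: that after $x^i_{p_i}$ the only admissible next vertex is $s^{i+1}$ (resp. after $x^k_{p_k}$ it is $c^{1,2}$, after $w^{i,j}_{p_i,p_j}$ it is the next connector), and that $z$ cannot be taken early because it is not adjacent to any hatted vertex yet must precede all of them. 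The time bound $\O((kq)^2)$ for building $G'$ and the parameter bound $\O(k^2)$ then give a valid \FPT-reduction; \cref{thm:path2cycle} transfers the hardness to POHCP, since both \param{distance to clique} and \param{twin-cover number} are closed under adding a universal vertex.
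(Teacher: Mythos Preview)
Your backward direction does not go through, and the gap is exactly the obstacle you flag but do not actually overcome. You assert that ``(P2) makes every non-hat vertex of $R$ unavailable immediately after visiting any $x^i_p$'', but (P2) as you state it only says $z \prec y$ for the hatted $Y$-vertices; it places no constraint whatsoever on the vertices of $R$. Since $R$ is a clique, from $x^1_{p_1}$ the path may legally proceed to \emph{any} other $x^i_p$ or to any $w^{i,j}_{p,r}$ whose two $x$-predecessors happen to be visited. In particular the path may visit several (or all) vertices of $X^1$ before touching $s^2$, after which (P3)--(P4) are trivially satisfiable for every $w^{1,j}_{\cdot,\cdot}$ regardless of whether $G$ has a multicolored clique. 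The step-by-step forcing argument from \cref{thm:w1-d2p} worked only because on the path $\Psi$ the two $\Psi$-neighbours of $x^i_p$ were hatted; once $R$ is a clique that leverage is gone, and nothing in your partial order reinstates it.

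The paper resolves this by a genuinely different encoding: it selects a vertex of color~$i$ by visiting all of $X^i$ \emph{except} $x^i_{p_i}$ before $z$, and it reverses the precedence constraints to $x^i_p \prec w^{i,j}_{r,s}$ whenever $p \neq r$ (and symmetrically for~$j$). The backward direction is then a counting argument rather than a forcing argument: each $\hat s^i$ forces at least one vertex of $X^i$ to lie right of~$z$; each $c^{i,j}$ forces exactly one vertex of $W^{i,j}$ to lie left of~$z$, and the dual constraints then pin down which single $X^i$-vertex survives. This is the key idea your proposal is missing. For \param{twin-cover number}, the paper simply deletes the inter-gadget clique edges so that each $X^i$ and $W^{i,j}$ becomes a separate clique module; your ``small adjustment'' gestures at this but does not fit with the rest of your construction once the encoding is changed.
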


\begin{proof}
As for \cref{thm:w1-d2p}, we use a reduction from \textsc{Multicolored Clique} to POHPP and the use \cref{thm:path2cycle} to derive \W-hardness also for POHCP. Let $G$ be an instance of MCP with color classes $V_1, \dots, V_k$ where $V_i = \{v^i_1, \dots, v^i_q\}$. The construction works similar as for \cref{thm:w1-d2p}. The main difference is how we encode the selection of a vertex from a color class. While in the proof of \cref{thm:w1-d2p} the representative of the selected vertex was visited, here we visit all representatives except from the one of the selected vertex. 

We construct a graph $G'$ using the following gadgets:

    \begin{description}
    \item[Selection Gadget] For every color class $i \in [k]$, we have a clique $X^i$ that contains for every $v^i_p \in V_i$ a vertex $x^i_p$.
    \item[Verification Gadget] For each pair $i,j \in [k]$ with $i < j$, we have a clique $W^{i,j}$ containing for every edge $v^i_pv^j_r \in E(G)$ a vertex $w^{i,j}_{p,r}$.
    \end{description}

    Next we describe how these gadgets are connected to each other (see \cref{fig:w-clique}). 
    The union of the selection gadgets and the verification gadgets forms one large clique. We order the subcliques described in the selection and verification gadgets as follows: \[X^1, X^2, \dots, X^k, W^{1,2}, W^{1,3}, \dots, W^{1,k}, W^{2,3}, W^{2,4}, \dots, W^{k-1,k}.\] We have the following additional vertices in $G'$. First, we have $s^1$, $\hat{s}^1$, and $\hat{t}^1$ that are adjacent to all vertices in $X^1$. For every $i$ with $2 \leq i \leq k-1$, we have vertices $s^i$, $\hat{s}^i$ and $\hat{t}^i$. For $i = k$, we have only $s^i$ and $\hat{s}^i$. Vertex $s^i$ is adjacent to all vertices in $X^i$ and $X^{i-1}$ and the vertices $\hat{s}^i$ and $\hat{t}^i$ are adjacent to all vertices in $X^{i}$. Furthermore, $\hat{s}^i$ is adjacent to $\hat{t}^{i-1}$. For all $i,j \in [k]$ with $i < j$, there are vertices $c^{i,j}$ and $\hat{c}^{i,j}$ that are adjacent to all vertices in $W^{i,j}$ and to all vertices in the clique to left of $W^{i,j}$ in the ordering described above. Finally, we have vertices $z$ and $\hat{z}$ that are adjacent to all vertices in $W^{k-1,k}$. Furthermore, $z$ is adjacent to $\hat{s}^1$. Observe that the graph $G'$ without the selection and verification gadgets contains $3k + 2 \binom{k}{2} + 2$ vertices. Therefore, the \param{distance to clique} of $G'$ is $\O(k^2)$. 

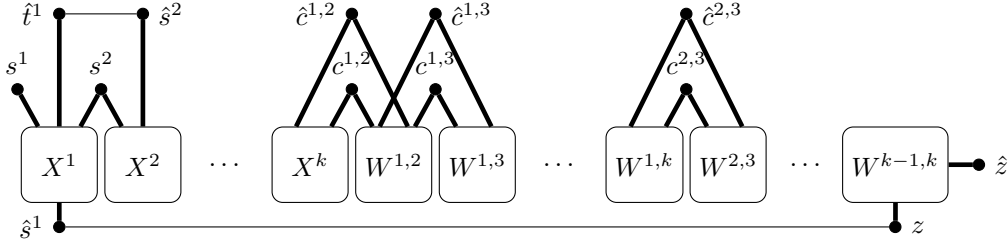
\begin{figure}
    \centering
    \begin{tikzpicture}[xscale=1.1, clique/.style={draw, rounded corners, minimum size=1cm}, multiedge/.style={line width=1.75}]
        \node[clique] (X1) at (0,0) {$X^1$};
        \node[clique] (X2) at (1,0) {$X^2$};
        \node at (2,0) {$\dots$};
        \node[clique] (Xk) at (3,0) {$X^k$};
        \node[clique] (W12) at (4,0) {$W^{1,2}$};
        \node[clique] (W13) at (5,0) {$W^{1,3}$};
        \node at (6,0) {$\dots$};
        \node[clique] (W1k) at (7,0) {$W^{1,k}$};
        \node[clique] (W23) at (8,0) {$W^{2,3}$};
        \node at (8.95,0) {$\dots$};
        \node[clique] (Wk-1k) at (10,0) {$W^{k-1,k}$};

        \node[vertex, label=90:{$s^1$}] (s1) at (-0.5,1) {};
        \node[vertex, label=90:{$s^2$}] (s2) at (0.5,1) {};
        \node[vertex, label=180:{$\hat{t}^1$}] (t2) at (0,2) {};
        \node[vertex, label=0:{$\hat{s}^2$}] (u2) at (1,2) {};
        \node[vertex, label=90:{$c^{1,2}$}] (c12) at (3.5,1) {};
        \node[vertex, label=180:{$\hat{c}^{1,2}$}] (d12) at (3.5,2) {};
        \node[vertex, label=90:{$c^{1,3}$}] (c13) at (4.5,1) {};
        \node[vertex, label=0:{$\hat{c}^{1,3}$}] (d13) at (4.5,2) {};
        \node[vertex, label=90:{$c^{2,3}$}] (c23) at (7.5,1) {};
        \node[vertex, label=0:{$\hat{c}^{2,3}$}] (d23) at (7.5,2) {};
        \node[vertex, label=0:{$z$}] (t1) at (10,-0.825) {};
        \node[vertex, label=0:{$\hat{z}$}] (z2) at (11,0) {};
        \node[vertex, label=180:{$\hat{s}^1$}] (u1) at (0,-0.825) {};

        \draw[multiedge] (s1) -- (X1);
        \draw[multiedge] (s2) -- (X1);
        \draw[multiedge] (s2) -- (X2);
        \draw[multiedge] (t2) -- (X1);
        \draw (t2) -- (u2);
        \draw[multiedge] (u2) -- (X2);
        \draw[multiedge] (d12) -- (Xk.110);
        \draw[multiedge] (c12) -- (Xk);
        \draw[multiedge] (c12) -- (W12);
        \draw[multiedge] (d12) -- (W12.70);
        \draw[multiedge] (d13) -- (W12.110);
        \draw[multiedge] (c13) -- (W12);
        \draw[multiedge] (c13) -- (W13);
        \draw[multiedge] (d13) -- (W13.70);
        \draw[multiedge] (d23) -- (W1k.110);
        \draw[multiedge] (c23) -- (W1k);
        \draw[multiedge] (c23) -- (W23);
        \draw[multiedge] (d23) -- (W23.70);

        \draw[multiedge] (t1) -- (Wk-1k);
        \draw[multiedge] (z2) -- (Wk-1k);
        \draw (t1) -- (u1);
        \draw[multiedge] (u1) -- (X1);
    \end{tikzpicture}
    \caption{Construction of \cref{thm:w1-d2c}. If a vertex is adjacent to a box via a thick edge, then the vertex is adjacent to all vertices in that box. For \param{distance to clique}, all the boxes are pairwise adjacent, for \param{distance to linear forest modules} they are not adjacent.}
    \label{fig:w-clique}
\end{figure}

    The partial order $\pi$ is the reflexive and transitive closure of the following constraints.
    \begin{enumerate}[(P1)]
        \item $s^1 \prec v$ for all $v \in V(G') \setminus \{s^1\}$,\label{d2c:p1}
        \item $v \prec \hat{z}$ for all $v \in V(G') \setminus \{\hat{z}\}$,\label{d2c:p1'}
        \item $s^1 \prec s^2 \prec \dots \prec s^k$,\label{d2c:p2}
        \item $z \prec \hat{s}^1 \prec \hat{t}^1 \prec \hat{s}^2 \prec \hat{t}^2 \prec \dots \prec \hat{s}^k$,\label{d2c:p3}
        \item $c^{i,j} \prec c^{i',j'}$ for all $i,j,i',j' \in [k]$ with $i < i'$ or $i = i'$ and $j < j'$,\label{d2c:p4}
        \item $c^{k-1,k} \prec z$, \label{d2c:p5}
        \item $c^{i,j} \prec w^{i,j}_{p,r}$ for all $i,j \in [k]$ with $i < j$ and all $w^{i,j}_{p,r} \in W^{i,j}$,\label{d2c:p6}
        \item $x^i_p \prec w^{i,j}_{r,s}$ for all $i,j \in [k]$ with $i < j$ and all $w^{i,j}_{r,s} \in W^{i,j}$ with $p \neq r$,\label{d2c:p7}
        \item $x^j_p \prec w^{i,j}_{r,s}$ for all $i,j \in [k]$ with $i < j$ and all $w^{i,j}_{r,s} \in W^{i,j}$ with $p \neq s$.\label{d2c:p8}
    \end{enumerate}

    \begin{claim}\label{claim:clique-direction1}
        If there is a multicolored clique $\{v^1_{p_1}, \dots, v^k_{p_k}\}$ in $G$, then there is a $\pi$-extending Hamiltonian path in $G'$.
    \end{claim}

    \begin{claimproof}
        First, we visit $s^1$ and then all the vertices in $X^1$ except for the vertex $x^1_{p_1}$. Then we visit $s^2$. We repeat this for all $i \in [k]$. 
        Next we visit $c^{1,2}$. Now we visit $w^{1,2}_{p_1, p_2}$. Note that this is possible since all the vertices that have to be to the left of $w^{1,2}_{p_1, p_2}$ by \pef{d2c:p7} and \pef{d2c:p8} are already visited. Next we visit $c^{1,3}$ and afterwards $w^{1,3}_{p_1, p_3}$ which is possible for the same reason as mentioned above. We repeat this procedure until we reach $w^{1,k}_{p_1,p_k}$. Next, we visit $c^{2,3}$. The same procedure works until we reach $w^{k-1,k}_{p_{k-1},p_k}$. Next we visit $z$. Now, starting with $i=1$, we visit for all $i \in [k]$ the vertices $\hat{s}^i$, followed by $x^i_{p_i}$ and $\hat{t}^i$ until we reach $x^k_{p_k}$. Then, we visit $\hat{c}^{1,2}$ followed by all remaining vertices $w^{1,2}_{p,r} \in W^{1,2}$ which is possible as all the left vertices in the constraints \pef{d2c:p7} and \pef{d2c:p8} have already been visited. We repeat this for all $i,j$ with $i<j$ in the lexicographic order and eventually we have have visited all vertices in $W^{k-1,k}$. Finally, we visit $\hat{z}$.
    \end{claimproof}

    It remains to show that a $\pi$-extending Hamiltonian path in $G'$ implies the existence of a multicolored clique in $G$. Assume that there is a $\pi$-extending Hamiltonian path $\cP$ in $G'$.

    \begin{claim}\label{claim:clique-selection1}
        For any $i \in [k]$, there is at least one vertex in $X^i$ that is to the right of $z$ in $\cP$.
    \end{claim}

    \begin{claimproof}
        By \pef{d2c:p3}, $\hat{s}^i$ is to the right of $z$. By \pef{d2c:p1'}, $\hat{s}^i$ is not the last vertex of $\cP$ and, thus, has two neighbors in $\cP$. One of these two neighbors has to be in $X^i$. This vertex is to the right of $z$.
    \end{claimproof}

    \begin{claim}\label{claim:clique-verification1}
        For any $i,j \in [k]$ with $i < j$, there is exactly one vertex of $W^{i,j}$ that is to the left of $z$ in $\cP$.
    \end{claim}

    \begin{claimproof}
        First assume for contradiction that there are two vertices $w^{i,j}_{p,r}$ and $w^{i,j}_{p',r'}$ to the left of $z$ in $\cP$. It holds that $p \neq p'$ or $r \neq r'$. We assume that $p \neq p'$, the case that $r \neq r'$ works analogously. Due to \pef{d2c:p7}, $x^i_p$ has to be to the left of $w^{i,j}_{p',r'}$ and $x^i_{p'}$ has to be to the left of $w^{i,j}_{p,r}$. All other vertices of $X^i$ have to be to the left of both $w^{i,j}_{p,r}$ and $w^{i,j}_{p',r'}$. Therefore, all vertices of $X^i$ are to the left of $z$ in $\cP$, contradicting \cref{claim:clique-selection1}.

        Now we show that there is also at least one vertex of $W^{i,j}$ to the left of $z$ in $\cP$.
        Consider vertex $c^{a,b}$ with $c^{a,b} \neq c^{1,2}$. Due to \pef{d2c:p1}, \pef{d2c:p4}, and \pef{d2c:p5}, vertex $c^{a,b}$ is not the start vertex of $\cP$ and $c^{a,b}$ is to the left of \(z\) in $\cP$. Thus, it has two neighbors in $\cP$. These neighbors have to be in the two $W$-cliques to which $c^{a,b}$ is adjacent. As we have observed above, there is at most one vertex of any of those cliques to the left of $z$. Therefore, both cliques contain a neighbor of $c^{a,b}$ in $\cP$ and these neighbors are to the left of $z$. As every clique $W^{i,j}$ is adjacent to a vertex $c^{a,b} \neq c^{1,2}$, it follows that every of those cliques contains a vertex that is to the left of $z$.
    \end{claimproof}

    \begin{claim}\label{claim:clique-selection2}
        For any $i \in [k]$, there is exactly one vertex $x^i_{p_i}$ that is to the right of $z$ in $\cP$.
    \end{claim}

    \begin{claimproof}
        Due to \cref{claim:clique-selection1}, it remains to show that there is at most one such vertex. As we have seen in \cref{claim:clique-verification1}, there is a vertex in $w^{i,i+1}_{p_i,p_{i+1}}$ or a vertex $w^{i-1,i}_{r_{i-1},r_i}$ to the left of $z$ in $\cP$. The constraints \pef{d2c:p7} or \pef{d2c:p8} imply that all vertices but $x^i_{p_i}$ or $x^i_{r_i}$, respectively, are to the left of~$z$.
    \end{claimproof}
       
    \begin{claim}\label{claim:clique-verification2}
        Let the $p_i$ be chosen as in \cref{claim:clique-selection2}. The set $\{v^1_{p_1}, \dots, v^k_{p_k}\}$ forms a multicolored clique in $G$.
    \end{claim}

    \begin{claimproof}
        Let $i,j \in [k]$ with $i < j$. Due to \cref{claim:clique-verification1}, there is a vertex $w^{i,j}_{a,b}$ to the left of $z$ in $\cP$. By \pef{d2c:p7} and \pef{d2c:p8}, all vertices $x^i_p$ and $x^j_r$ with $p \neq a$ and $r \neq b$ have to be to the left of $w^{i,j}_{a,b}$ in $\cP$. Therefore, \cref{claim:clique-selection2} implies that $a = p_i$ and $b = p_j$. By construction of $G'$, this implies that the edge $v^i_{p_i}v^j_{p_j}$ exists in $G$. 
    \end{claimproof}

    \Cref{claim:clique-direction1} and \cref{claim:clique-verification2} as well as the fact that the \param{distance to clique} of $G'$ is $\O(k^2)$ prove that $(G', \pi)$ is a proper \FPT{} reduction from MCP to POHPP parameterized by the \param{distance to clique}. This finalizes the proof for the case of \param{distance to clique}.

    Similar as in \cref{thm:w1-d2p}, we can adapt the construction to show that  POHPP is also \W-hard for \param{distance to cluster modules}. To this end, we delete all the edges in $G'$ between different gadgets. This works since these edges have not been used in the proof and since the resulting graph has \param{distance to cluster modules} $\O(k^2)$.
\end{proof}

\subsection{Algorithms}\label{sec:clique-algorithms}

Now we focus on POHPP and POHCP for the \param{distance to block}. First we observe that both problems can be solved in linear time if the graph is a block graph, i.e., its \param{distance to block} is~0. 

\begin{observation}\label{obs:block}
     POHPP and POHCP can be solved on block graphs in $\O(n + m + |\pi|)$ time, where $n$ is the number of vertices and $m$ is the number of edges of the given graph.
\end{observation}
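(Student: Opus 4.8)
The plan is to exploit the rigid shape of Hamiltonian paths in block graphs. The first and main step is a structural lemma: a connected block graph $G$ with at least two vertices is traceable if and only if its block-cut tree $\mathcal{T}$ is a path, and in that case every Hamiltonian path $\cP$ of $G$ visits the blocks in the order $B_1,\dots,B_k$ induced by $\mathcal{T}$, spending one contiguous ``round'' in each $B_i$ during which it visits exactly the vertices of $B_i$ not visited earlier, and for $i<k$ it leaves $B_i$ through the cut vertex $c_i$ shared by $B_i$ and $B_{i+1}$, so $c_i$ is the last vertex of round~$i$. For the forward implication I would show that every leaf block of $\mathcal{T}$ contains an endpoint of $\cP$: the vertices of a leaf block other than its unique cut vertex have all their $G$-neighbours inside that block, so along $\cP$ they form at most two maximal runs (the only non-block vertex that can separate two runs is the cut vertex, which is used once), and a short case analysis of the neighbours of the first and last vertices of these runs shows that an endpoint of $\cP$ must lie in the block; since $\mathcal{T}$ then has at most two leaves, it is a path. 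The round decomposition follows because a cut vertex is traversed only once, so once $\cP$ passes $c_i$ it can never return to the $B_1,\dots,B_i$ side, which forces both the block order and the fact that $c_i$ is the exit vertex of round~$i$. The reverse implication is a direct construction using that every block is a clique.

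Given the lemma, deciding POHPP becomes a linear-time compatibility test. The tree $\mathcal{T}$ has two ends, hence there are exactly two candidate block orders. Fixing one of them, each vertex receives a forced round: a non-cut vertex of $B_i$ and the shared cut vertex $c_i$ both belong to round~$i$. Inside a single round all vertices lie in one clique, so any permutation of them is a valid path segment; therefore a $\pi$-extending Hamiltonian path with this block order exists if and only if (a) $r(u)\le r(v)$ for every relation $u\prec_\pi v$, where $r$ is the round function, and (b) for every $i<k$ the exit vertex $c_i$ is a maximal element of $\pi$ restricted to the vertices of round~$i$, so that this round can be linearly ordered with $c_i$ last. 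If both hold, concatenating a linear extension of $\pi$ chosen separately inside each round (with $c_i$ last in round~$i$) yields a global $\pi$-extending Hamiltonian path; conversely any such path satisfies (a) and (b). Both conditions are checked with a single scan over $\pi$ once $r$ and the cut-vertex indices are available.

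For the running time, $\mathcal{T}$ and the test that it is a path are computed in $\O(n+m)$ by a standard depth-first search; the round function $r$ for a given direction is read off $\mathcal{T}$ in $\O(n+m)$; and each of the two feasibility checks costs $\O(|\pi|)$. If $G$ is disconnected, or $\mathcal{T}$ is not a path, the instance is a no-instance and this is detected in $\O(n+m)$. Altogether this solves POHPP in $\O(n+m+|\pi|)$ time and, by the constructive description above, also produces a witness. For POHCP, I would use that a graph with a Hamiltonian cycle has no cut vertex, so a block graph is Hamiltonian only if it is a single clique $K_n$ with $n\ge 3$; in that case any linear extension of $\pi$ is a $\pi$-extending Hamiltonian cycle, since its two endpoints are adjacent, and otherwise the answer is ``no''. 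This is again decidable in $\O(n+m+|\pi|)$ time.

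I expect the main obstacle to be the structural lemma, in particular proving the rigidity claims carefully — that the block order along $\cP$ is forced and that the shared cut vertex must be the last vertex of its round — while correctly handling the possibility that a cut vertex is an endpoint of $\cP$ and the degenerate cases $k=1$ and leaf blocks of size two. Once this rigidity is in place, the reduction to a topological-sort-style check and the running-time bookkeeping are routine.
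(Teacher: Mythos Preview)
Your approach is correct and self-contained, but it differs from the paper's. The paper handles POHCP exactly as you do (a Hamiltonian block graph is a clique, so any linear extension works), but for POHPP it does not argue structurally at all: it simply invokes Theorem~5.1 of \cite{beisegel2024computing}, which says that a linear-time POHPP algorithm on each block of a graph lifts to a linear-time algorithm on the whole graph; since every block of a block graph is a clique, the per-block problem is trivial and the result follows in one line. Your route instead re-derives, for the special case of block graphs, the content of that lifting theorem: the block--cut tree must be a path, the path visits the blocks in order, each shared cut vertex is the exit of its round, and feasibility reduces to the two local checks (a) and (b). This buys you a fully elementary proof that does not rely on the external black box, at the cost of having to argue the rigidity lemma carefully (the leaf-block/endpoint argument and the ``once past $c_i$ you cannot return'' step). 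The paper's version is shorter precisely because that rigidity is already packaged inside the cited theorem.
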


\begin{proof}
    If the given graph is a clique, then any linear extension of $\pi$ is a solution of POHPP and POHCP. If the graph is not a clique, then it is not 2-connected and, therefore, it is not Hamiltonian. Thus, POHCP is trivial in that case. As was shown in \cite[Theorem~5.1]{beisegel2024computing}, a linear-time algorithm that solves POHPP on the blocks of a graph implies a linear-time algorithm that solves the problem on the whole graph. Therefore, we can solve  POHPP in $\O(n + m + |\pi|)$ time on block graphs.
\end{proof}

Next, we present an \FPT{} algorithm for \param{edge distance to block} when given a respective edge set.

\begin{theorem}\label{thm:block-edge}
     POHPP and POHCP can be solved in $(k+1)! \cdot 2^k \cdot n^{\O(1)}$ time on an $n$-vertex graph when given a set $F \subseteq E(G)$ with $|F| \leq k$ and $G - F$ is a block graph.
\end{theorem}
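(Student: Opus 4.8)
The bound $(k+1)!\cdot 2^k$ suggests the following plan. Any $\pi$-extending Hamiltonian path $\cP=(v_1,\dots,v_n)$ of $G$ uses some subset $F'=\{f_1,\dots,f_j\}\subseteq F$ of the deleted edges with $j\le k$, where we index $f_1,\dots,f_j$ by the order in which they occur along $\cP$. Deleting these edges from $\cP$ splits it into $j+1$ directed subpaths $P_0,\dots,P_j$ that occur in this order along $\cP$, each of which is a path of the block graph $G-F$, and $f_i$ joins the last vertex of $P_{i-1}$ to the first vertex of $P_i$. First I would guess $F'$, the order $f_1,\dots,f_j$, and for each $f_i=u_iw_i$ which of its two endpoints is the last vertex of $P_{i-1}$; the number of such guesses is $\sum_{j=0}^{k}\binom{k}{j}\,j!\,2^{j}\le (k+1)\cdot k!\cdot 2^{k}=(k+1)!\cdot 2^{k}$. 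Writing $f_i=u_i\to w_i$ for the chosen orientation, each guess fixes that $P_{i-1}$ ends in $u_i$ and $P_i$ starts in $w_i$; the start of $P_0$ and the end of $P_j$ stay free (or are guessed as well, at the cost of an extra factor of at most $n^2$).

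For a fixed guess it remains to decide whether $V(G-F)$ can be partitioned into directed paths $P_0,\dots,P_j$ of the block graph $G-F$ with the prescribed terminals such that their concatenation $P_0\cdot f_1\cdot P_1\cdots f_j\cdot P_j$ is a linear extension of $\pi$, and, in the positive case, to reconstruct such a path. I would solve this in $n^{\O(1)}$ time by adapting the block-graph algorithm of Beisegel et al.~\cite{beisegel2024computing} underlying \cref{obs:block}: instead of searching for a single $\pi$-extending Hamiltonian path of the block graph, run a dynamic program over the block-cut tree of $G-F$ that looks for a vertex-partition into at most $k+1$ directed paths with the given endpoints whose concatenation respects $\pi$. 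As in the single-path case, the interaction of the (at most $k+1$) path segments with one (clique) block is resolved by enumerating how their endpoints and their passes through that clique look, and at each cut vertex one checks that the partial order can be honoured exactly as before; the number of segments that enter a subtree of the block-cut tree stays bounded by $k+1$, so the state space is polynomial.

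Correctness is straightforward in both directions: given a $\pi$-extending Hamiltonian path of $G$, the guess matching the set, the order, and the orientation of its $F$-edges produces a path partition of the required form; conversely, reinserting the edges $f_1,\dots,f_j$ between the consecutive segments of the guessed partition yields an ordered Hamiltonian path of $G$ whose order, by construction, already satisfies every constraint of $\pi$, and all of whose edges lie in $E(G-F)\cup F'\subseteq E(G)$. Multiplying the $(k+1)!\cdot 2^{k}$ guesses by the $n^{\O(1)}$ time per guess gives the claimed bound for POHPP; the bound for POHCP then follows from \cref{thm:cycle2path} (alternatively, one additionally guesses the edge of $G$ closing the cycle). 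I expect the main obstacle to be the multi-path generalisation of the block-cut-tree dynamic program of \cite{beisegel2024computing} — in particular, bounding the number of simultaneously ``open'' path segments inside a subtree and performing the bookkeeping at cut vertices so that the resulting global order is a linear extension of $\pi$.
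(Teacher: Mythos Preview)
Your enumeration over ordered, oriented subsets $\hat F\subseteq F$ (optionally together with a start vertex) is exactly the paper's notion of an ``edge choice'' and yields the $(k{+}1)!\cdot 2^k\cdot n$ outer loop. The divergence, and the gap, is in the per-guess subroutine.

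You propose to adapt the block--cut--tree algorithm of \cite{beisegel2024computing} into a dynamic program that routes up to $k{+}1$ path segments simultaneously, with state recording which segments are ``open'' in a subtree. The obstacle you name at the end is genuine and your sketch does not overcome it: $\pi$ is a global constraint, so a subtree-based DP cannot certify $\pi$-consistency from a polynomial-size interface. Knowing which segments cross the boundary of a subtree tells you nothing about \emph{which} vertices inside the subtree were assigned to \emph{which} segment, and that assignment is precisely what decides a constraint $x\prec_\pi y$ with $x$ in this subtree and $y$ elsewhere. The single-path case in \cite{beisegel2024computing} escapes this only because the block--cut tree of a traceable graph is itself a path, so the blocks are visited in one fixed linear order and the ``DP'' degenerates to a left-to-right sweep; with several segments that linear order no longer comes from the tree structure.

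The paper does not use a tree DP at all. After the same normalisation you describe, it notes that for each even $i$ the projection of the segment $u_i\leadsto u_{i+1}$ into the block--cut tree $\T$ of $G-F$ is uniquely determined by its endpoints; every cut vertex of $\T$ is labelled by the (unique) segment whose projection passes through it, and a cut vertex demanded by two projections causes immediate rejection. The algorithm then processes the segments \emph{sequentially}: walking along each projection, inside every clique block it greedily takes any vertex currently minimal in $\pi$, \emph{skipping} cut vertices whose label is larger than the current index (they are reserved for later segments) and leaving the label-$i$ exit cut vertex for last. After $u_\ell$, the leftover vertices are handled by the plain block-graph routine of \cref{obs:block}. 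Correctness is a short exchange argument: if the greedy gets stuck, any $\pi$-extending Hamiltonian path compatible with the edge choice would have to place some vertex $x$ at a block visit where the greedy omitted it, but then either the label of $x$ forbids this placement or some $\pi$-predecessor of $x$ is still unvisited, a contradiction. The missing idea in your proposal is thus that the per-guess problem is solved greedily in the global left-to-right order induced by the segments, not by a DP over subtrees of $\T$.
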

\begin{proof}
   Assume there is a $\pi$-extending Hamiltonian path of $G$. Then this path possibly uses some or all of the edges in $F$. We encode the way the path uses these edges by so-called edge choices. An \emph{edge choice} consists of a start vertex $u_0$ of the Hamiltonian path and an ordered subset $\hat{F}$ of $F$ such that for all edges in $\hat{F}$ one of the two directions in which the edge can be traverse is fixed. There are $\leq (k+1)!$ ordered subsets of $F$, at most $2^k$ choices of the directions and $n$ choices for the start vertex.
   
   Therefore, there are at most $(k+1)! \cdot 2^k \cdot n$ many edge choices.  In the following, we will call an edge choice \emph{valid} if there is a $\pi$-extending Hamiltonian path starting in $u_0$ such that all edges of $F$ that are part of this path are in $\hat{F}$ and are visited following the order that is fixed in that edge choice. 
   
   We first check whether the ordering of the vertices incident to $\hat{F}$ that is implied by our edge choice is a valid subordering of a linear extension of $\pi$. If this is not the case, we directly reject the edge choice. Otherwise, observe that vertices can be incident to more than one edge in an edge choice $\hat{F}$. However, in every valid edge choice, there are at most two edges of $\hat{F}$ incident to a particular vertex and these edges are consecutive in the ordering of $\hat{F}$. If this is the case, then we normalize the edge choice as follows. A sequence of edges in $\hat{F}$ where consecutive edges share an endpoint is replaced by a new edge going from the first vertex $u$ of this sequence to the last vertex $v$ of the sequence. The partial order is updated accordingly, i.e., all predecessors of some vertex in the sequence that are not part of the sequence become predecessors of $u$ and all successors that are not part of the sequence become successors of $v$.

    This normalization step produces a sequence of pairwise different vertices $(u_1, \dots, u_\ell)$ where for every \emph{even} number $i \in [\ell]$ vertices $u_{i-1}$ and $u_{i}$ are connected via an edge of $\hat{F}$ and $u_{i}$ and $u_{i+1}$ are not connected via an edge of $\hat{F}$. Note that it is possible that the start vertex $u_0$ is equal to $u_1$.
    
    Now, the task of the algorithm is to fill the remaining vertices into the gaps between the $u_i$'s or into the gap after $u_\ell$. Let $i \in [\ell] \cup \{0\}$ be even. Consider the block-cut tree $\T$ of $G - F$. For any path $P$ in $G - F$, we define the projection of $P$ into $\T$ as the path in $\T$ that contains for every vertex in $P$ either the block in which it lies if there is a unique one or the vertex itself if the vertex is a cut vertex. As $\T$ is a tree, any path between $u_i$ and $u_{i+1}$ in $G - F$ has the same projection into $\T$. We first compute these projections for every pair $(u_i, u_{i+1})$ where $i$ is even. If the path projection contains cut vertices, then we label these cut vertices with label $i$. Note that we can reject the edge choice if we have to relabel a cut vertex since a cut vertex cannot be used in two different projections. After this process we label all the unlabeled vertices -- in particular all the vertices that are not cut vertices -- with label $-1$. 

    We now traverse the vertices $u_i$ in increasing order. If $i$ is odd, then we directly go to $u_{i+1}$. Otherwise, we follow the projection between $u_i$ and $u_{i+1}$. Whenever we visit a block, we take all vertices that are possible due to $\pi$ in arbitrary order except for those vertices that have a label that is larger than $i$. Note that all such vertices are cut vertices. Further note that we might also visit cut vertices to blocks that we do not enter directly afterwards. Vertices that have label exactly $i$ are taken only if there is no other unvisited vertex in that block that can be taken. This is because these vertices are exactly those cut vertices that are used to enter the next block on the projection between $u_i$ and $u_{i+1}$ and, thus, we have to leave the block when we visit a vertex with label $i$.
    If we get stuck during this process, we reject that edge choice. Otherwise, we eventually reach vertex $u_\ell$. If the remaining vertices do not induce a connected graph, then we again reject the edge choice. Otherwise, we use the algorithm of \cref{obs:block} to check whether we can traverse these vertices starting in $u_\ell$. 
    
    It is obvious that a Hamiltonian path constructed by the algorithm is $\pi$-extending. It remains to show that the algorithm only fails to find a Hamiltonian path for some edge choice if there is no valid Hamiltonian path for that choice. To this end, assume that we reach a point where our algorithm gets stuck. This means there is no vertex minimal in the remainder of $\pi$ that is adjacent to the last visited vertex. In particular, either $u_{i+1}$ or the next cut vertex on the way to $u_{i+1}$ cannot be taken. Assume for contradiction that there is a $\pi$-extending Hamiltonian path $\cP$ in $G$ following our edge choice. Obviously, $\cP$ traverses the blocks of $G$ and the $u_i$'s in the same order as we have done. However, there must be a vertex $x$ that is traversed in $\cP$ during a block visit where our algorithm did not visit vertex $x$. Let $x$ be the leftmost vertex in $\cP$ that fulfills this property. There are two options why vertex $x$ was not visited by our algorithm. If its label did not fit, then $x$ would also not have been visited in this block in $\cP$  (since otherwise the path $\cP$ would not fit to our edge choice). The only other option is an unvisited vertex that is forced to be to the left of $x$ by $\pi$. However, this vertex would have been visited in $\cP$ contradicting the choice of $x$. Hence, $\cP$ cannot exist.
\end{proof}

Note that Dumas et al.~\cite{dumas2025polynomial} present a linear kernel for deciding whether the \param{edge distance to block} of a graph $G$ is at most $k$. However, they do not specify whether this kernel can be used to compute a respective set $F$ of $G$ in \FPT{} time.

We can use the algorithm given in \cref{thm:block-edge} to develop an \XP{} algorithm for the parameter \param{(vertex) distance to block}.

\begin{theorem}
    If the \param{distance to block} of a graph $G$ with $n$ vertices is $k$, then  POHPP and POHCP can be solved in $n^{\O(k)}$ time.
\end{theorem}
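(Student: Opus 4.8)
The plan is to reduce the vertex‑deletion setting to the edge‑deletion setting that is already handled by \cref{thm:block-edge}. First I would compute a set $W \subseteq V(G)$ with $|W| = k$ such that $G - W$ is a block graph. Since block graphs are recognisable in polynomial time, iterating over all $\binom{n}{k} = n^{\O(k)}$ candidate $k$‑sets and testing each one finds such a $W$ within the claimed bound; fix one such $W$ for the remainder of the argument.

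The key observation is that any Hamiltonian path $\cP$ of $G$ is incident to at most $2k$ edges that touch $W$, because each of the $k$ vertices of $W$ has at most two neighbours on $\cP$. So next I would guess the set $F^\ast$ of these $W$‑incident edges used by $\cP$: there are at most $kn$ edges incident to $W$, hence at most $\sum_{j \le 2k} \binom{kn}{j} = n^{\O(k)}$ subsets $F^\ast$ of size at most $2k$ to try. For each guess, let $\tilde G$ be obtained from $G$ by deleting every $W$‑incident edge that is \emph{not} in $F^\ast$. Then $\tilde G - F^\ast$ is exactly $G - W$ together with the vertices of $W$ re‑added as isolated vertices, which is again a block graph (each added block is a $K_1$, hence a clique). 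Since $|F^\ast| \le 2k$, \cref{thm:block-edge} solves POHPP on $(\tilde G, \pi)$ in time $(2k+1)! \cdot 2^{2k} \cdot n^{\O(1)}$, which is $n^{\O(k)}$ because $k \le n$ implies $k^{\Theta(k)} = n^{\O(k)}$. I would return ``yes'' iff some guess of $F^\ast$ succeeds.

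For correctness: any $\pi$‑extending Hamiltonian path of $\tilde G$ is also one of $G$ (same vertex set, fewer edges), and conversely, if $G$ has a $\pi$‑extending Hamiltonian path $\cP$, then for the guess $F^\ast := \{e \in E(\cP) : e \text{ incident to } W\}$ we only delete edges not used by $\cP$, so $\cP$ survives in $\tilde G$ and \cref{thm:block-edge} detects it. Hence $(G,\pi)$ is a yes‑instance iff some guess succeeds. Multiplying the $n^{\O(k)}$ choices for $W$, the $n^{\O(k)}$ choices for $F^\ast$, and the $n^{\O(k)}$ running time of \cref{thm:block-edge} keeps the total at $n^{\O(k)}$. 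The statement for POHCP then follows from \cref{thm:cycle2path}, which adds only an $n^2$ factor.

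I do not expect a genuine obstacle here; the one thing to get right is the design idea itself — guess which $W$‑incident edges the Hamiltonian path uses and \emph{delete all the others}, which converts bounded \param{distance to block} into bounded \param{edge distance to block} so that \cref{thm:block-edge} applies. The remaining parts (bounding the number of guesses, checking that adding isolated vertices preserves being a block graph, and verifying that the factorial/exponential factors in $k$ are absorbed into $n^{\O(k)}$) are routine bookkeeping.
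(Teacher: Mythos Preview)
Your proposal is correct and follows the same high-level plan as the paper: guess how the Hamiltonian path interacts with the modulator $W$ and reduce to the edge-deletion setting of \cref{thm:block-edge}. The mechanics differ slightly, and the comparison is worth recording.

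The paper guesses, for each $w \in W$, its predecessor and successor on the path together with a global ordering of $W$ (a ``vertex choice'', $n^{\O(k)}$ many). It then \emph{deletes} $W$, adds at most $k$ shortcut edges between the chosen predecessor/successor pairs (updating $\pi$ accordingly), and invokes only the single-edge-choice subroutine inside \cref{thm:block-edge} on the resulting graph, whose deletion set is precisely the added edges. You instead guess the set $F^\ast$ of $W$-incident path edges, delete all other $W$-incident edges, and call \cref{thm:block-edge} as a black box; this keeps $W$ in $\tilde G$ as isolated vertices of $\tilde G - F^\ast$.

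Your version is arguably cleaner: no contraction of $W$-paths, no rewriting of $\pi$, and the correctness argument (``$\tilde G \subseteq G$, and for the right guess $\cP$ survives'') is a two-liner. The price is the extra $(2k+1)!\cdot 2^{2k}$ factor coming from the internal edge-choice enumeration of \cref{thm:block-edge}, which is harmless inside $n^{\O(k)}$. The one point you should make explicit is that \cref{thm:block-edge} is stated without assuming $G - F$ connected, so the isolated $W$-vertices cause no trouble: in any valid edge choice every $w \in W$ is incident to edges of $\hat F$ (these are its only edges in $\tilde G$) and hence is absorbed in the normalisation step there.
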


\begin{proof}
    First note that we can find a vertex set $W$ such that $G - W$ is a block graph and $|W| = k$ in time $n^{\O(k)}$ by enumerating all vertex subsets of size $k$. Fix one of these sets $W$. For every vertex in $W$, we choose one predecessor and one successor (or we decide that the vertex is the first or the last vertex of our Hamiltonian path). Furthermore, we fix an ordering of the vertices in $W$. We call these decisions a \emph{vertex choice}. There are $n^{\O(k)}$ many vertex choices. 
    
    For every of those vertex choices, we construct the graph $G'$ as follows. We delete all vertices of $W$ from $G$. Consider the pairs of predecessors and successors of the vertices in $W$. If they form a path starting and ending in vertices of $G - W$, we add an edge to $G'$ connecting the first vertex $s$ of this path with the last vertex $t$ of the path. We update the partial order $\pi$ accordingly, i.e., for any inner vertex on this path, we make its predecessors in $\pi$ to predecessors of $s$ and its successors in $\pi$ to successors of $t$. If we have chosen some vertex of $W$ as start vertex of the Hamiltonian path, then we add constraints to the partial order that makes $t$ (the first successor in $G-W$) the start vertex. Equivalently, if a vertex of $W$ is chosen to be the end vertex of the Hamiltonian path, then we make the vertex $s$ (the last predecessor in $G-W$) the end vertex of the partial order. We call the resulting updated partial order $\pip$.
    
   It is easy to see that a $\pi$-extending Hamiltonian path of $G$ that follows our vertex choice directly maps to a $\pip$-extending Hamiltonian path of $G'$ that traverses all the added edges in the order that is implied by the ordering of the vertices in $W$ and by the chosen predecessors and successors. Therefore, we can use the subroutine of \cref{thm:block-edge} that checks for the validity of an edge choice to decide the validity of our vertex choice here. This takes $n^{\O(1)}$ time. Thus, checking all the vertex choices needs $n^{\O(k)}$ time in total.
\end{proof}

As we have seen in \cref{thm:w1-d2c}, there is no \FPT{} algorithm for \param{distance to cluster modules}. However, we can give such an algorithm for \param{distance to clique module}.

\begin{theorem}
    Given an $n$-vertex graph $G$ with \param{distance to clique module}~$k$,  POHPP and POHCP can be solved in time $k! \cdot n^{\O(1)}$.
\end{theorem}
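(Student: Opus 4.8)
The plan is to guess the order in which the $k$ deleted vertices occur along the Hamiltonian path and then to thread the remaining (clique) vertices through it by a polynomial-time routine; the point of exploiting the module structure, compared with the $n^{\O(k)}$ algorithm for \param{distance to block}, is precisely that the clique vertices are interchangeable up to $\pi$, so only the \emph{relative order} of the $k$ deleted vertices has to be guessed, not their individual predecessors and successors.

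First I would note that a suitable deletion set can be found efficiently. If a clique $M$ is also a module, then any two vertices of $M$ have the same closed neighbourhood (they are \emph{true twins}), and conversely a maximal set of pairwise true twins is a clique module; hence a largest true-twin class is a largest clique module and can be computed in polynomial time. Let $K$ be such a class and $W := V(G)\setminus K$, so $|W|\le k$, $K$ induces a clique, and $K$ is a module of $G$. Because $K$ is a module, every $w\in W$ is either \emph{full} (adjacent to all of $K$) or \emph{empty} (adjacent to no vertex of $K$).

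Now fix one of the at most $k!$ linear orders $\sigma=(w_1,\dots,w_{|W|})$ of $W$ and decide in polynomial time whether some $\pi$-extending Hamiltonian path visits $W$ in this order. Such a path has the shape $B_0, w_1, B_1, w_2, \dots, w_{|W|}, B_{|W|}$, where each $B_j$ is a (possibly empty) block of $K$-vertices whose internal order is free subject to $\pi$. I would reject $\sigma$ at once if it contradicts $\pi$ restricted to $W$; for each empty $w_j$ the two incident blocks must be empty, forcing $w_{j-1}w_j, w_jw_{j+1}\in E(G)$ (and forcing $w_j$ to the corresponding path end if $j\in\{1,|W|\}$), so reject if such an edge is absent. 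Call an index $j$ \emph{usable} if every $W$-vertex incident to $B_j$ is full — only then may $B_j$ be nonempty — and say $B_j$ \emph{must} be nonempty if it is usable, internal, and its two framing vertices are non-adjacent in $G$. Finally, a constraint $w_j\prec_\pi v$ with $v\in K$ confines $v$ to a block $B_i$ with $i\ge j$, a constraint $v\prec_\pi w_j$ to some $B_i$ with $i<j$, and $v\prec_\pi v'$ forces the block of $v$ to be at most that of $v'$; so each $v$ is confined to a computable set of admissible block indices (an interval intersected with the usable indices).

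What remains — and this is the step I expect to be the main obstacle — is to test whether the vertices of $K$ can be distributed among the usable blocks so that each $v$ lands in an admissible block, $v\prec_\pi v'\Rightarrow B(v)\le B(v')$, and every must-be-nonempty block receives at least one vertex. Restricting attention to the usable indices turns every domain into a genuine interval of a chain, so this is an order-preserving assignment of the poset $(K,\pi|_K)$ into a chain with interval domains plus coverage demands; it is solvable in polynomial time, for instance by computing the pointwise-smallest feasible assignment with a single topological sweep and then repairing it to meet the coverage demands (using that feasible assignments form a lattice), or by casting the problem as a feasible-flow/matroid-intersection instance. Given a feasible assignment one reconstructs the path by ordering each block by a topological sort of $\pi$; one checks that the result is a linear extension of $\pi$, that its endpoints are automatically $\pi$-minimal and $\pi$-maximal, and that consecutive vertices are always adjacent in $G$ by the definitions of \emph{full}, \emph{usable}, and \emph{must-be-nonempty}. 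If no order $\sigma$ succeeds, the answer is ``no''. The overall running time is $k!\cdot n^{\O(1)}$, and the bound for POHCP follows from \Cref{thm:cycle2path}.
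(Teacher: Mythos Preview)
Your overall strategy coincides with the paper's: compute a largest true-twin class $K$ and the deletion set $W$, enumerate the at most $k!$ orderings of $W$ compatible with $\pi$, handle the vertices of $W$ that are non-adjacent to $K$ (your ``empty'' vertices, the paper's ``secluded'' vertices), compute for each $v\in K$ the interval of admissible blocks, and then try to thread the clique vertices into the gaps. The reduction to the block-assignment subproblem is set up correctly, and your treatment of empty vertices, usable blocks, and must-be-nonempty blocks matches the paper's normalisation step.

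The genuine gap is in the step you yourself flag as the main obstacle. The lattice observation is correct --- the order-preserving maps of $(K,\pi|_K)$ into the chain of usable blocks with interval domains do form a lattice, so a pointwise-smallest feasible map exists --- but covering a prescribed set of demand blocks is \emph{not} closed under meets or joins, so the lattice structure by itself does not tell you how to ``repair'' $f_{\min}$, and you do not specify the repair. The alternative suggestions (feasible flow, matroid intersection) are not obviously applicable either: the global order-preservation constraint $v\prec_\pi v'\Rightarrow B(v)\le B(v')$ is not a flow or matroid constraint in any evident way. This is exactly where the paper does the real work. It proves the structural fact that $v\prec_\pi v'$ implies $\ell(v)\le\ell(v')$ and $r(v)\le r(v')$, and then runs an explicit greedy: scanning the $w_i$ in order, it first places all \emph{forced} clique vertices (those with $r(x)=i$) just before $w_i$, and if a gap between non-adjacent $w_{i-1},w_i$ is still empty it fills it with an \emph{unforced} vertex of smallest $r$-value among those with $\ell(x)<i$. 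Correctness is shown by an exchange argument: assuming a $\pi$-extending Hamiltonian path for the chosen order exists, one takes such a path with the longest common prefix with the greedy output and derives a contradiction in each of three cases, using the $\ell,r$-monotonicity above. That monotonicity is also what guarantees the greedy's output is automatically a linear extension, which you assume but do not justify. So your framework is sound, but the explicit greedy rule and its exchange proof are precisely what you are missing.
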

\begin{proof}
     First note that \param{distance to clique module} and a corresponding set $W$ can be computed in polynomial time since the largest clique module is equivalent to the largest set of vertices with the same closed neighborhood. Let $W$ be a set of $k$ vertices such that $C = G - W$ is a clique module of $G$. The algorithm considers all \(k!\) orderings $\rho = (v_1, \dots, v_k)$ of $W$ that fulfill the constraints of $\pi$. We call these orderings \emph{choices}.
    
    \proofsubparagraph*{Removing secluded vertices} We say that a vertex of $W$ is \emph{secluded} if it is not adjacent to the vertices in $C$. We observe that the predecessor and the successor of a secluded vertex in a Hamiltonian path has to be an element of $W$. Let $(v_i, v_{i+1}, \dots, v_j)$ be a subsequence of $\rho$ such that $v_i$ and $v_j$ are not secluded but $v_{i+1}, \dots, v_{j-1}$ are secluded. We call $v_i$ and $v_j$ \emph{frontier vertices}. 
    Note that secluded vertices are defined with regard to \(W\), while a vertex is a frontier vertex for a fixed vertex choice \(\rho\).
    If some consecutive vertices in that subsequence are not adjacent in $G$, then a Hamiltonian path cannot follow that ordering of $W$. Thus, we can reject that choice. Otherwise, we contract this sequence to an edge $v_iv_j$. If there is a constraint $x \prec_\pi v_q$ with $i \leq q \leq j$, then the vertex $x$ has to be visited before $v_i$. Hence, we add the constraint $x \prec v_i$ to $\pi$. Similarly, if there is a constraint $v_q \prec_\pi y$ with $i \leq q \leq j$, then $y$ has to be visited after $v_j$ and we add the constraint $v_j \prec y$ to $\pi$. Let $\sigma = (w_1, \dots, w_{k'})$ be the ordering of the subset of non-secluded vertices $W' \subseteq W$ that results from that normalization. We define $G' :=  G[W' \cup C]$ and we call the updated partial order $\pip$. The following claim is a direct consequence of the explanations above.

    \begin{claim}\label{claim:secluded}
        If there is a $\pi$-extending path of $G$ following the ordering $\rho$ of $W$, then there is a $\pip$-extending path of $G'$ following the ordering $\sigma$ of $W'$.
    \end{claim}

    \proofsubparagraph{Adding clique vertices to $\sigma$} We now describe how the algorithm constructs a Hamiltonian path following the ordering $\sigma$ of $W'$. To this end, we define for every vertex $x \in C$ the values $\ell(x) = \max (\{i \mid w_i \prec_\pip x\} \cup \{0\})$ and $r(x) = \min (\{i \mid x \prec_\pip w_i \} \cup \{k'+1\})$. That is $\ell(x)$ is the last \(w_i\in \sigma\) such that \(w_i\) has to be left of \(x\), while \(r(x)\) is the first \(w_i\in \sigma\) that has to be right of \(x\).

    \begin{claim}\label{claim:lr}
        Let $x,y \in C$. It holds that $\ell(x) < r(x)$ and $\ell(y) < r(y)$. If $x \prec_\pip y$, then $\ell(x) \leq \ell(y)$ and $r(x) \leq r(y)$.
    \end{claim}

    \begin{claimproof}
        Let $(a,b) = (\ell(x), r(x))$. Assume that $a \neq 0$ and $b \neq k'+1$ (otherwise, $\ell(x) < r(x)$ trivially holds ). Since $w_a \prec_\pip x \prec_\pip w_b$, it holds that $w_a \prec_\pip w_b$. As $\rho$ fulfills the constraints of $\pi$, $a < b$ is true. 
        
        Now assume that $x \prec_\pip y$ and let $(c,d) = (\ell(y), r(y))$. If $a=0$, then $a \leq c$ trivially holds. If $d=k' + 1$, then $b \leq d$ trivially holds. If $a \neq 0$, then we know that $w_a\prec_\pip x \prec_\pip y$. If $d \neq k' + 1$, then it holds $x \prec_\pip y \prec_\pip w_d$. By the definition of the functions $\ell$ and $r$, in both cases it holds that $a \leq c$ or $b \leq d$, respectively.
    \end{claimproof}

    For every pair $(i,j)$ with $0 \leq i < j \leq k' + 1$, we form a bucket $B_{i,j}$ containing those vertices $x$ with $(\ell(x), r(x)) = (i,j)$. The vertices in $B_{i,j}$ are ordered according to $\pip$, i.e., if there are two vertices $x,y \in B_{i,j}$ with $x \prec_\pip y$ then $x$ is to the left of $y$ in $B_{i,j}$. 

    The algorithm tries to fill the vertices of $C$ into the gaps between the $w_i$'s.
    We traverse $\sigma$ starting in $w_1$. Whenever we reach a vertex $w_i$, then we visit all the unvisited vertices $x \in C$ with $r(x) = i$ directly before $w_i$ following their ordering in $\pip$. Note that we can do this by visiting them in increasing order of $\ell(x)$, due to \cref{claim:lr}. We call these vertices \emph{forced vertices} since the partial order $\pip$ forces them to be to the left of $w_i$.  If $i \neq 1$ and $w_iw_{i-1} \notin E(G)$, then we have to visit some vertex of $C$ between $w_{i-1}$ and $w_i$. Hence, if no unvisited vertex with $r(x) = i$ exists, we have to choose another vertex for the gap. Let $(a,b)$ be the tuple with minimal $b$ such that $a < i$ and $B_{a,b}$ contains an unvisited vertex. Note that $b > i$ because all vertices with $r(x) \leq i$ have already been visited. If no such tuple $(a,b)$ exists, then we reject the choice $\rho$. Otherwise, we choose $x$ to be the leftmost unvisited vertex in $B_{a,b}$ and visit it between $w_{i-1}$ and $w_i$. We call $x$ an \emph{unforced vertex} since the partial order $\pip$ did not force it to be to the left of $w_i$. We repeat this process until we have visited $w_k$ or we have rejected the choice. Afterwards, we add all the unvisited vertices after $w_k$ following their ordering in $\pip$. We call the resulting ordering $\tau$.

     \begin{claim}\label{claim:taui}
        Let $\tau_i$ be the subordering that has been produced after $w_i$ was traversed. The ordering $\tau_i$ induces a path in $G'$ and forms a prefix of a linear extension of $\pip$.
    \end{claim}

    \begin{claimproof}
        First, we show that $\tau_i$ induces a path. If consecutive vertices are both in $C$, then they are adjacent by definition. If both of them are in $W'$, then they are adjacent since otherwise the algorithm would have added some vertex of $C$ between them. If one of them is in $C$ and the other is in $W'$, then they are also adjacent since there is no secluded vertex in $W'$. Hence, $\tau_i$ induces a path.

        It remains to show that $\tau_i$ is a prefix of a linear extension of $\pip$. The constraints on vertices of $W'$ are all fulfilled since we have checked this right at the beginning for $\rho$. If a vertex $x \in C$ was added left of some vertex $w_j \in  W'$, then $\ell(x) \leq j$ and, thus, $w_j \not \prec_\pip x$. If it has been added to the right of some vertex $w_j$, then $r(x) > j$ and, thus, $x \not\prec_\pip w_j$. Finally, consider the case that $x \prec_\pip y$ for some $x,y \in C$. Due to \cref{claim:lr}, it holds that $\ell(x) \leq \ell(y)$ and $r(x) \leq r(y)$. Consider the step in the algorithm where $y$ has been added to $\tau_i$. If $y$ is an forced vertex, then $x$ either was already visited (if $r(x) < r(y)$ or $x$ is an unforced vertex) or it has been added in the same step to the left of $y$. If $y$ is an unforced vertex, then $x$ has already been visited at this point since otherwise $x$ would have been chosen instead of $y$. Hence, $x$ is to the left of $y$ in $\tau_i$.
    \end{claimproof}

    This proves that the algorithm works correctly if it returns an ordering $\tau$. It remains to show that it also works correctly if it rejects the choice $\sigma$ of $W'$.

    \begin{claim}\label{claim:clique_module_no_path}
        If the algorithm rejects an ordering $\rho$ of $W$, then there is no $\pi$-extending Hamiltonian path of $G$ following that ordering of $W$.
    \end{claim}

    \begin{claimproof}
        Due to \cref{claim:secluded}, it is sufficient to show that there is no $\pip$-extending Hamiltonian path of $G'$ following $\sigma$. Assume for contradiction that such a Hamiltonian path exists. Let $w_i$ be the vertex where we have rejected and let $\tau_i$ be the ordering that was constructed till that step. We will show in the following that there is a $\pip$-extending Hamiltonian path following the decisions of our algorithm, contradicting the fact that our algorithm rejected $\rho$.
        
        First observe that the only option for a rejection is that we had to take an unforced vertex but we did not find a suitable one in the unvisited vertices. Let $\cP$ be a $\pip$-extending Hamiltonian path of $G'$ following $\sigma$ whose common prefix with $\tau_i$ is as long as possible. 
        Let \(x\) be the first vertex on \(\tau_i\) after the common prefix. 
        If there are multiple \(\pip\)-extending paths with this prefix, we choose one where \(x\) is leftmost.
        Note that this path is well-defined as $\tau_i$ cannot be a prefix of $\cP$ because otherwise the algorithm would have found a suitable unforced vertex. 
        Let $y$ be the vertex after the common prefix in $\cP$.

\begin{figure}
    \centering
    \includegraphics[page=2,width=\textwidth]{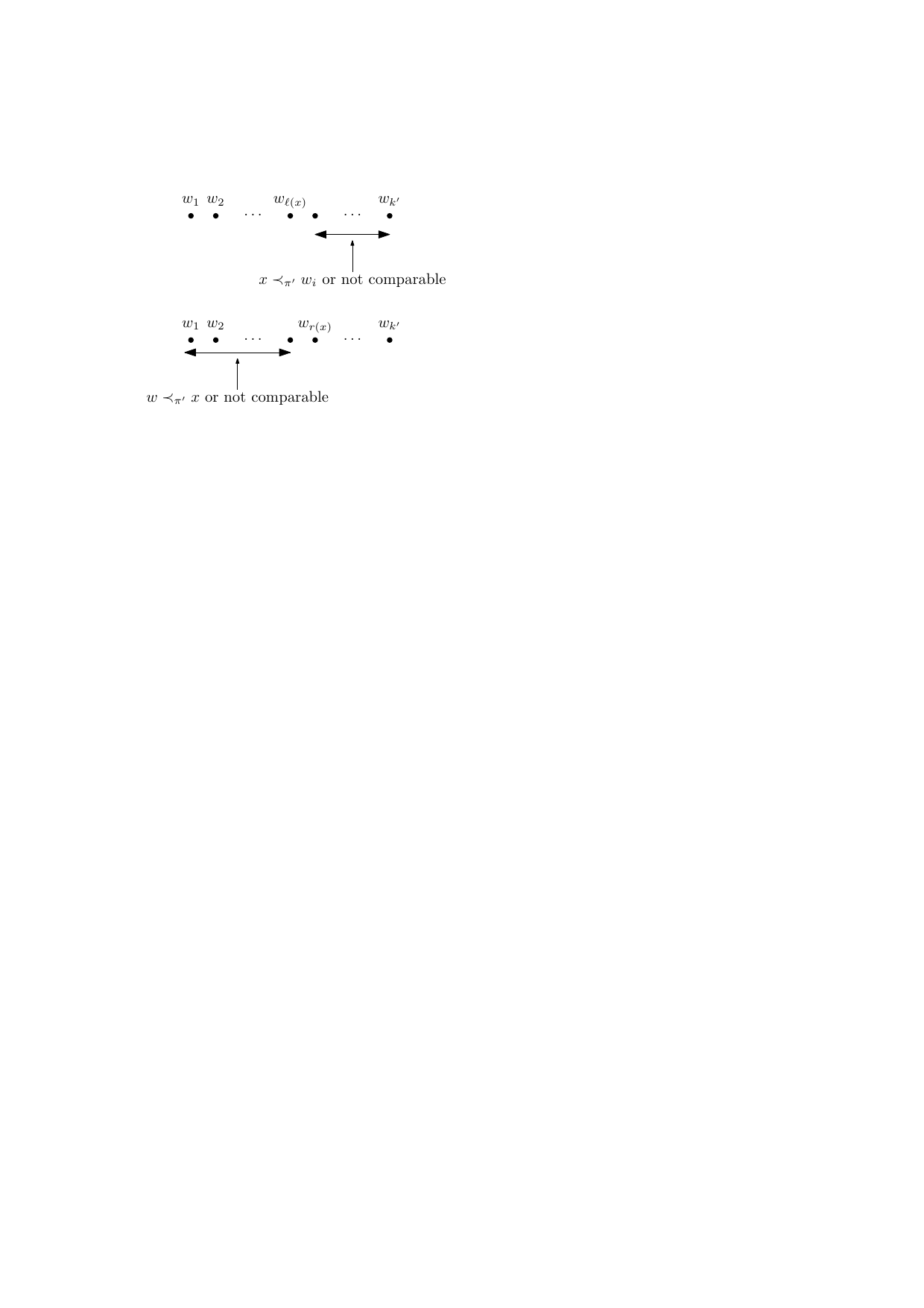}
    \caption{The three cases in the proof of \cref{claim:clique_module_no_path}}
    \label{fig:placeholder}
\end{figure}
        Now we consider the different cases for $x$. If $x$ is a vertex $w_j$, then $y$ and all other vertices between $y$ and $w_j$ in $\cP$ have to be unforced vertices, i.e., $r(z) > j$ for all these vertices $z$. We put all these vertices to the right of $w_j$ following their order in $\cP$. We claim that this results in another $\pip$-extending Hamiltonian path $\cP'$ of $G'$. It is obvious that the result is still a Hamiltonian path as the moved vertices are adjacent to all other vertices in $G'$. Furthermore, as observed above, none of the moved vertices is forced by $\pip$ to be to the left of $w_j$. Therefore, $\cP'$ is a $\pip$-extending Hamiltonian path with a longer common prefix with $\tau_i$; a contradiction to the choice of $\cP$.

        Next, assume $x$ is a forced vertex, i.e., $r(x) = j$ where $w_j$ is the leftmost vertex of $W'$ to the right of $x$ in $\tau_i$. We know that $x$ has to be between $y$ and $w_j$ in $\cP$. We just move $x$ to the front of $y$. Since $x$ and $y$ are not in $W'$, they form universal vertices in $G'$ and, thus, this results in another Hamiltonian path. By \cref{claim:taui}, the ordering $\tau_i$ is a prefix of a linear extension of $\pip$; hence this Hamiltonian path is also $\pip$-extending. Again this contradicts the choice of $\cP$ as the new Hamiltonian path has a longer common prefix with $\tau_i$.

        Finally, assume that $x$ is an unforced vertex, i.e., $r(x) > j$ where $w_j$ is the leftmost vertex of $W'$ to the right of $x$ in $\tau_i$. This implies that $x$ is the single vertex between the non-adjacent vertices $w_{j-1}$ and $w_j$ in $\tau_i$. This further implies that $y$ is also an unforced vertex since there must be a vertex of $C$ between $w_{j-1}$ and $w_j$ and all vertices $z$ with $r(z) = j$ are to the left of $x$ in $\tau_i$ (otherwise $x$ would not have been chosen by the algorithm). Due to the choice of the algorithm, we know that $r(y) \geq r(x)$. Let now $z \in C$ be the rightmost vertex in $\cP$ that is to the left of $x$ and fulfills the condition $r(z) \geq r(x)$. As observed above, \(y\) is such a vertex, and thus \(z\) exists. We swap $x$ and $z$ in $\cP$. It is obvious that the result is still a Hamiltonian path as $x$ and $z$ are universal vertices in $G'$. It remains to show that the path is still $\pip$-extending. Let $z'$ be a vertex between $z$ and $x$ in $\cP$. As vertex $z'$ is not part of the common prefix of $\tau_i$ and $\cP$, it is to the right of $x$ in $\tau_i$. Therefore, as $\tau_i$ is a prefix of a linear extension of $\pip$, the constraint $z' \prec x$ is not part of $\pip$. By the choice of $z$, it must hold that $r(z') < r(x) \leq r(z)$. Due to \cref{claim:lr}, the constraint $z \prec z' $ is also not part of $\pip$. Hence, the swap is allowed and results in another $\pip$-extending Hamiltonian path $\cP'$ of $G'$. However, either the common prefix of $\tau_i$ with $\cP'$ is longer than the common prefix with $\cP$ (if $y = z$) or $x$ is further left in \(\cP'\) than in $\cP$. In both cases, this contradicts the choice of $\cP$.
    \end{claimproof}

    \proofsubparagraph*{Adding back the secluded vertices}
    Assume that the algorithm has not rejected a choice $\rho$ and has produced an ordering $\tau$ of $G'$. 
    To get an ordering of $G$, we have to add the deleted secluded vertices again. To do this, we need the following observation.

    \begin{claim}\label{claim:frontier}
        No vertex of $C$ is put between two consecutive frontier vertices in $\sigma$.
    \end{claim}

    \begin{claimproof}
        As consecutive frontier vertices $w_i$ and $w_{i+1}$ are adjacent in $G'$, a vertex $x \in C$ would have been added between them only if $r(x) = i+1$. However, all vertices in $C$ that are to the left of $w_{i+1}$ in $\pip$ are also to the left of $w_i$ in $\pip$, due to the normalization step described above. Thus, $r(x) \neq i+1$ for all vertices $x \in C$.
    \end{claimproof}

    Due to this claim, we can add the removed secluded vertices back between their corresponding frontier vertices and get the ordering $\phi$ of $G$. The predecessors of secluded vertices in $\pi$ have been forced by $\pip$ to be to the left of the left frontier vertex. Similarly, their successors in $\pi$ were forced to be to the right of the right frontier vertex. Hence, the ordering $\phi$ induces a $\pi$-extending Hamiltonian path of $G$.

    As the algorithm considers at most $k!$ orderings of $W$ and for every of these orderings it needs polynomial time, the claimed running time holds.
\end{proof}

\section{Conclusion}

We have presented a comprehensive classification of the parameterized complexity of finding Hamiltonian paths or cycles with precedence constraints from the perspective of graph width parameters (see \cref{fig:parameters}). We were able to prove that all our \XP{} algorithms cannot be extended to \FPT{} algorithms (unless $\W = \FPT$). Nevertheless, our \W-hardness proofs cannot be used to show -- under the assumption of the Exponential Time Hypothesis -- that the algorithms' exponents are asymptotically optimal since these proofs increase the parameter value quadratically. Therefore, one may ask whether there are also \FPT-reductions where the parameter increases only linearly. 

So far, our work focused on the existence of parameterized algorithms. As a next step, one could also consider the question which of the \FPT{} results could be extended to kernels of polynomial size. As mentioned, there is a trivial polynomial kernel for \param{vertex cover number}. We are rather optimistic that such a kernel can also be given for \param{feedback edge set number}. 

One could also try to generalize \FPT{} or \XP{} algorithms to distance parameters for more general graph classes. One option are the \emph{series-parallel graphs}, a generalization of outerplanar graphs. These graphs are exactly the graphs of \param{treewidth}~2. Note that a generalization to \param{distance to treewidth~3} is hopeless as POHPP and POHCP are already para-\NP-hard for \param{edge distance to bandwidth~3}~\cite{beisegel2025graphi}. Another option could be considering graph width parameters not on all graphs but only on some subclass. In particular, the planar case of \param{distance to outerplanar} could be interesting as it lies between the general \W-hard case and the \FPT-case where the deleted vertices all lie in the interior of the outer face.

As POHPP and POHCP are already hard for very restricted graphs, one may ask how their complexity behaves if we restrict both the graph and the partial order. It has been shown \cite{beisegel2024computing} that POHPP is in \XP{} and \W-hard when parameterized by the partial order's \param{width}. Using similar arguments as for \cref{thm:path2cycle,thm:cycle2path}, we can show that the same holds for POHCP. However, it is open what happens if the problems are parameterized by both the partial order's \param{width} and some graph width parameter such as \param{treewidth} or \param{distance to clique}.

\bibliography{lit}
\bibliographystyle{plainurl}

\end{document}